\pgfplotsset{compat=1.12}
\definecolor{DarkGreen}{rgb}{0.1,0.5,0.1}
\definecolor{DarkRed}{rgb}{0.5,0.1,0.1}
\definecolor{DarkBlue}{rgb}{0.1,0.1,0.5}
\definecolor{Gray}{rgb}{0.2,0.2,0.2}
\definecolor{c1}{RGB}{38, 70, 83}
\definecolor{c2}{RGB}{42, 157, 143}
\definecolor{c3}{RGB}{233, 196, 106}
\definecolor{c5}{RGB}{231, 111, 81}
\definecolor{c4}{RGB}{244, 162, 97}
\definecolor{c1}{RGB}{38, 70, 83}
\definecolor{c2}{RGB}{42, 157, 143}
\definecolor{c3}{RGB}{233, 196, 106}
\definecolor{c5}{RGB}{231, 111, 81}
\definecolor{c4}{RGB}{244, 162, 97}
\lstdefinestyle{mystyle}{
    commentstyle=\color{DarkBlue},
    keywordstyle=\color{DarkRed},
    numberstyle=\tiny\color{Gray},
    stringstyle=\color{DarkGreen},
    basicstyle=\footnotesize,
    breakatwhitespace=false,         
    breaklines=true,                 
    captionpos=b,                    
    keepspaces=true,                 
    numbers=left,                    
    numbersep=5pt,                  
    showspaces=false,                
    showstringspaces=false,
    showtabs=false,                  
    tabsize=2
}
\def\draft{1}
\def\submit{0}
    \def\ShowAuthNotes{1}
    \def\ShowAuthNotes{0}
\newcommand{\forsubmit}[1]{#1}
\newcommand{\forreals}[1]{}
\newcommand{\forreals}[1]{#1}
\newcommand{\forsubmit}[1]{}
\newcommand{\authnote}[2]{{ \footnotesize \bf{\color{DarkRed}[#1's Note:
{\color{DarkBlue}#2}]}}}
\newcommand{\authnote}[2]{}
\newtheorem{theorem}{Theorem}[section]
\newtheorem{lemma}[theorem]{Lemma}
\newtheorem{corollary}[theorem]{Corollary}
\newtheorem{proposition}{Proposition}
\newtheorem{definition}[theorem]{Definition}
\newtheorem*{definition*}{Definition}
\newtheorem*{proposition*}{Proposition}
\newtheorem*{theorem*}{Theorem}
\theoremstyle{definition}
\newtheorem*{example*}{Example}
\newtheoremstyle{example_contd}
{\topsep} {\topsep}%
{}
{}
{\bfseries}
{.}
{1em}
{\thmname{#1} \thmnumber{ #2}\thmnote{#3} (continued)}
\theoremstyle{example_contd}
\newcommand{\chapterref}[1]{\hyperref[ch:#1]{Chapter~\ref{ch:#1}}}
\newcommand{\claimref}[1]{\hyperref[claim:#1]{Claim~\ref{claim:#1}}}
\newcommand{\corollaryref}[1]{\hyperref[cor:#1]{Corollary~\ref{cor:#1}}}
\newcommand{\definitionref}[1]{\hyperref[def:#1]{Definition~\ref{def:#1}}}
\newcommand{\equationref}[1]{\hyperref[eq:#1]{Equation~\ref{eq:#1}}}
\newcommand{\factref}[1]{\hyperref[fact:#1]{Fact~\ref{fact:#1}}}
\newcommand{\figureref}[1]{\hyperref[fig:#1]{Figure~\ref{fig:#1}}}
\newcommand{\tableref}[1]{\hyperref[tab:#1]{Table~\ref{tab:#1}}}
\newcommand{\itemref}[1]{\hyperref[item:#1]{Item~(\ref{item:#1})}}
\newcommand{\lemmaref}[1]{\hyperref[lem:#1]{Lemma~\ref{lem:#1}}}
\newcommand{\propref}[1]{\hyperref[prop:#1]{Proposition~\ref{prop:#1}}}
\newcommand{\expref}[1]{\hyperref[exp:#1]{Example~\ref{exp:#1}}}
\newcommand{\propositionref}[1]{\hyperref[prop:#1]{Proposition~\ref{prop:#1}}}
\newcommand{\remarkref}[1]{\hyperref[rem:#1]{Remark~\ref{rem:#1}}}
\newcommand{\sectionref}[1]{\hyperref[sec:#1]{Section~\ref{sec:#1}}}
\newcommand{\theoremref}[1]{\hyperref[thm:#1]{Theorem~\ref{thm:#1}}}
\newcommand{\assumptionref}[1]{\hyperref[ass:#1]{Assumption~\ref{ass:#1}}}
\newcommand{\Esymb}{\mathbb{E}}
\DeclareMathOperator*{\E}{\Esymb}
\renewcommand{\Pr}{\mathrm{Pr}}
\newcommand{\cA}{{\cal A}}
\newcommand{\cC}{{\cal C}}
\newcommand{\cD}{{\cal D}}
\newcommand{\cF}{{\cal F}}
\newcommand{\cG}{{\cal G}}
\newcommand{\cH}{{\cal H}}
\newcommand{\cO}{{\cal O}}
\newcommand{\cX}{{\cal X}}
\newcommand{\cY}{{\cal Y}}
\newcommand{\defeq}{\stackrel{\small \mathrm{def}}{=}}
\renewcommand{\leq}{\leqslant}
\renewcommand{\geq}{\geqslant}
\newcommand{\R}{\mathbb{R}}
\newcommand{\N}{\mathbb N}
\newcommand{\ignore}[1]{}
\newcommand{\poly}{{\rm poly}}
\DeclareMathOperator*{\argmin}{arg\,min}
\renewcommand{\epsilon}{\varepsilon}
\newcommand{\eps}{\epsilon}
\newcommand{\remove}[1]{}
\newcommand{\yhat}{\hat{y}}
\renewcommand{\yhat}{\hat{y}}
\newcommand{\Ber}{\mathrm{Ber}}
\renewcommand{\cX}{\mathcal{X}}
\newcommand{\fps}{f_{\mathrm{ps}}}
\newcommand{\fpo}{f_{\mathrm{po}}}
\newcommand{\tps}{\theta_{\mathrm{ps}}}
\newcommand{\Regret}{\mathsf{Regret}}
\newcommand{\transcriptp}{\{(x_t, y_t, f_t, p_t)\}_{t=1}^T}
\renewcommand{\time}{\mathsf{time}}
\newcommand{\timetext}{\mathsf{time}}
\title{Revisiting the Predictability of Performative, Social Events} 
\author{Juan Carlos Perdomo\thanks{Email: jcperdomo@g.harvard.edu. This work was supported by the Harvard Center for Research on Computation and Society and the Alfred P. Sloan Foundation grant G-2020-13941. Accepted to ICML 2025.}\\\ Harvard University}
\date{\today}
\begin{document}
\maketitle
\begin{abstract}
Social predictions do not passively describe the future; they actively shape it. They inform actions and change individual expectations in ways that influence the likelihood of the predicted outcome. Given these dynamics, to what extent can social events be predicted? This question was discussed throughout the 20th century by authors like Merton, Morgenstern, Simon, and others who considered it a central issue in social science methodology. In this work, we provide a modern answer to this old problem. Using recent ideas from performative prediction and outcome indistinguishability, we establish that one can always efficiently predict social events accurately, regardless of how predictions influence data. While achievable, we also show that these predictions are often undesirable, highlighting the limitations of previous desiderata. We end with a discussion of various avenues forward.
\end{abstract}
\pagenumbering{gobble}
\pagenumbering{arabic}

\section{Introduction}
Social predictions do not passively describe the future; they actively shape it. They inform actions and change expectations in ways that influence the likelihood of predicted events. For instance, economic forecasts influence market prices, election predictions influence voter turnout, and climate forecasts shape policies that impact future weather patterns. 

This dynamic, where predictions are \emph{performative} and shape data distributions, is pervasive throughout social prediction. It is also a feature, not a bug.
The overarching goal of building systems that make predictions about people is to inform actions that drive positive changes in the world. For example, we predict health outcomes to cure disease and poverty to alleviate it.
While desirable, this dynamic also introduces methodological challenges and casts doubt on the predictability of social events.
Motivated by these issues, we revisit the following question:

\begin{center}
\emph{Is it possible to efficiently find a predictor that actively influences \\ the likelihood of observed events, yet still produces valid predictions?}
\end{center}

During the 20th century, multiple researchers recognized that social predictions shape social patterns and posed this question in their work. To give a few examples, in his 1928 thesis, Oskar Morgenstern argued that accurate economic forecasts are generally impossible since public predictions can be self-negating \citep{morgenstern1928wirtschaftsprognose}. 
Later, \citet{simon1954bandwagon}, as well as \citet{grunberg1954predictability}, used recently popularized fixed point theorems from topology to make progress on this problem and proved the \emph{existence} of predictions that influence outcomes and yet remain calibrated. However, they did not address the algorithmic question of how one might actually \emph{find} these predictors efficiently (both computationally and statistically speaking).

In this work, we revisit this question using modern mathematical tools. Framing our analysis in the language of \emph{performative prediction} -- a recent learning-theoretic framework introduced by \citet{perdomo2020performative} that formalizes the causal aspects of social prediction -- we establish the following result. 
One can always efficiently find predictions that actively shape the data distribution over outcomes and simultaneously satisfy rigorous validity guarantees like multi-calibration \citep{hkrr} or outcome indistinguishability \citep{oi}. 
Moreover, the statistical and computational complexity of finding these predictors is, in many cases, just as easy as in supervised learning, where the distribution is static.

However, while these rigorously calibrated predictions are always achievable, they are not always desirable. The fact that predictions are performative and influence the likelihood of future events invalidates naive extensions of traditional solution concepts inherited from supervised learning like calibration. Through simple examples, we show that these prediction rules can lead to poor equilibria where predictions exactly match the conditional distribution over outcomes -- and hence are perfectly multicalibrated -- while simultaneously explaining none of the variance in outcomes. We conclude by discussing avenues forward.

\subsection{Overview of Technical Results}

In this work, we study the problem of finding prediction rules $f$ that actively shape the distribution of observed outcomes $y$, yet still produce a valid forecast. 

We cast this problem in the language of performative prediction \citep{perdomo2020performative}, a recent learning theoretic framework that formalizes how data in social contexts is not static, but rather a function of the published predictor $f$.
The key conceptual device of the framework is the notion of the \emph{distribution map} $\cD(\cdot)$, a function which maps predictors $f$ to distributions over feature, outcome pairs $(x,y)$. It captures how different predictors induces different distributions.

In this work, we focus on the outcome performative case, in which predictions shape the distribution over outcomes $y$ but not features $x$.
In particular, given a randomized predictor $f$, we write $(x,p,y) \sim \cD(f)$ as shorthand for $x \sim \cD_x, p \sim f(x),$ and $y \sim \cD_y(x,p)$. The distribution $\cD_x$ over $x$ is static, a prediction $p$ is sampled from $f(x)$, and the conditional distribution over outcomes $y$ is any function of $x$ and the forecaster's revealed prediction $p \in \R$. 
This setup exactly captures the prediction dynamics present in domains like health or education (see e.g. \cite{perdomo2023difficult} for a real world example). The learner makes a prediction $p$ regarding someone's individual outcome (future heart disease) using historical features (age, smoking history). The prediction influences treatments and individual behavior in ways that shape the likelihood of the predicted outcome (disease), but not the historical realization of features..

We assert that a forecast is valid if it is \emph{performatively multicalibrated}, or synonymously, \emph{indistinguishable}.
Intuitively, a predictor is multicalibrated if $f(x)$ equals $y$ (in expectation), not just overall, but also when we condition information about the individual $x$ and the prediction $p\sim f(x)$. A predictor is outcome indistinguishable if its correctness cannot be efficiently falsified, on the basis of the observed data and a pre-specified collection of computational tests. We use the terms multicalibration and (outcome) indistinguishability interchangeably since, interestingly, both conditions are formally equivalent \citep{oi}. 
We formally define these concepts for our outcome performative setting below:

\begin{definition}
\label{def:formal_indistinguishability}
A randomized predictor $f$ mapping features $x$ to predictions $p$ is $\eps$ performatively multicalibrated (indistinguishable) with respect to a collection $\cC \subseteq \{\cX \times \R \rightarrow \R\}$ if 
\begin{align*}
	\big| \E_{\substack{x \sim \cD_x, p \sim f(x)\\ y \sim \cD_y(x,p)}} [c(x,p)(y - p)] \big| \leq \eps \quad \text{ for all } c \in \cC.
\end{align*}
\end{definition}

Here, $\cC$ parametrizes the degree to which predictions match outcomes. If $\cC$ consists of just the constant 1 function, we recover the guarantee considered in \citet{simon1954bandwagon} and \citet{grunberg1954predictability} that $\E_{\cD(f)}[y]= \E_{\cD(f)}[f(x)]$ overall. If $\cC$ consists of all bounded, measurable functions, then $f(x)$ must be equal to the conditional distribution over outcomes, $f(x) = \E_{\cD(f)}[y|x]$ for all $x$. We can interpolate between these two extremes by varying $\cC$.

While the guarantee in \Cref{def:formal_indistinguishability} is known to be achievable in supervised learning where $\cD(f) = \cD(f')$ for all $f$ and $f'$, the situation is substantially more complicated in this performative setting. Predictions can, in general, be \emph{self-negating} and outcomes can move away from the published prediction $p \sim f(x)$. Given this possibility, it is not immediately obvious that performatively calibrated $f$ even exist (that is a predictor $f$ such that $\E_{\cD(f)}[f(x)] = \E_{\cD(f)}[y]$) without making strong regularity assumptions on $\cD(\cdot)$. The situation is further complicated by the fact that the distribution map $\cD(\cdot)$ is unknown to the learner. They can only deploy a predictor $f$ and observe the induced samples $(x,p,y) \sim \cD(f)$. 
 
Our first result shows that this guarantee \Cref{def:formal_indistinguishability} is, in fact, efficiently achievable both statistically and computationally, thereby providing a modern learning-theoretic answer to the question posed by Morgenstern, Simon, and others. Unlike the bulk of work in performative prediction \citep{perdomo2020performative, mendler2020stochastic}, this result requires no smoothness or continuity assumptions on the outcome performative distribution map $\cD(\dot)$ other than the condition that outcomes $y$ lie in a bounded range.
\begin{theorem*}[Informal]
Let $\cA$ be any online algorithm which is guaranteed to produce predictions $p_t$ such that for any adversarially chosen sequence $\{(x_t,y_t)\}_{t=1}^T$, 
\begin{align}
\label{eq:inf_online}
	\big| \sum_{i=t}^T c(x_t,p_t) (y_t - p_t) \big| \leq \Regret(T) \text{ for all }c \in \cC.
\end{align}
Then, given $n$ draws from $\cD(\cdot)$, the outputs of $\cA$ can be converted into a single, randomized predictor $f_\cA$ such that with probability $1-\delta$ over the randomness of the draws from $\cD(\cdot)$,
\begin{align*}
		\big| \E_{(x,y,p)\sim \cD(f)} [c(x,p)(y - p)] \big| \leq 
	 \frac{\Regret(n)}{n} + \sqrt{\frac{\log(|\cC|)+ \log(1/\delta)}{n}} \quad \text{ for all } c \in \cC.
\end{align*}
Furthermore, if $\cA$ runs in  $\timetext(t)$ at round $t$, then $f_{\cA}$ can be computed in time $\cO(n \cdot \timetext(n))$.\footnote{Recall that $(x,p,y) \sim \cD(f)$ is shorthand for the sampling process $x\sim \cD_x , p\sim f(x), y \sim \cD_y(x,p)$. We abuse notation and write $(x,y) \sim \cD(f)$ if the forecast $p$ does not appear in the expectation, but data $(x,y)$ is still that induced by $f$. For deterministic functions $h$, we write $(x,y) \sim \cD(h)$ as shorthand for $x \sim \cD_x, y\sim \cD_y(x,h(x))$.}
\end{theorem*}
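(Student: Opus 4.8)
The plan is a performative online-to-batch conversion. I would run $\cA$ for $n$ rounds inside the following environment: at the start of round $t$, the internal state of $\cA$ together with its update rule defines a (randomized) predictor $h_t \colon \cX \to \R$ — namely, ``on input $x$, emit whatever $\cA$ would output as its round-$t$ prediction.'' I then \emph{deploy} $h_t$: draw $x_t \sim \cD_x$, feed $x_t$ to $\cA$ to obtain $p_t \sim h_t(x_t)$, draw $y_t \sim \cD_y(x_t,p_t)$, and feed $y_t$ back to $\cA$ so it advances to state $h_{t+1}$. The output is $f_\cA \defeq$ the uniform mixture of $h_1,\dots,h_n$: to sample $p \sim f_\cA(x)$, pick $t \in [n]$ uniformly and return a fresh draw from $h_t(x)$. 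The procedure makes $n$ calls to $\cA$, the $t$-th costing $\timetext(t)$, plus $\cO(n)$ sampling overhead, so $f_\cA$ is produced in time $\sum_{t=1}^n \timetext(t) = \cO(n\cdot\timetext(n))$ when $\timetext$ is nondecreasing, and storing the $n$ states suffices to evaluate it.

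Next I would fix $c \in \cC$ and split the population calibration error of $f_\cA$. Because $f_\cA$ is a uniform mixture, $\E_{(x,y,p)\sim\cD(f_\cA)}[c(x,p)(y-p)] = \tfrac1n\sum_{t=1}^n \mu_t$ where $\mu_t \defeq \E_{(x,y,p)\sim\cD(h_t)}[c(x,p)(y-p)]$. Writing $S_t \defeq c(x_t,p_t)(y_t - p_t)$ for the value realized during training, the triangle inequality gives
\begin{align*}
\Big| \E_{(x,y,p)\sim\cD(f_\cA)}[c(x,p)(y-p)] \Big| \;\leq\; \frac1n\Big|\sum_{t=1}^n S_t\Big| \;+\; \frac1n\Big|\sum_{t=1}^n (\mu_t - S_t)\Big| .
\end{align*}
The first term is at most $\Regret(n)/n$: the sequence $\{(x_t,y_t)\}_{t=1}^n$ produced by the interaction is \emph{some} sequence (the feedback loop is just a particular adaptive adversary that may let $y_t$ depend on $p_t$), and $\cA$ satisfies \eqref{eq:inf_online} against every such sequence with the $p_t$ being exactly its own outputs, so the regret bound holds pathwise.

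The second term is where performativity must be handled carefully. Let $\cF_{t-1}$ be the $\sigma$-algebra generated by everything through round $t-1$ (the observations $(x_s,p_s,y_s)_{s<t}$ together with $\cA$'s internal coins used so far). The key point is that $h_t$, and hence $\mu_t$, is $\cF_{t-1}$-measurable, while conditioned on $\cF_{t-1}$ the triple $(x_t,p_t,y_t)$ is a \emph{fresh} draw from $\cD(h_t)$; therefore $\E[S_t \mid \cF_{t-1}] = \mu_t$, so $(\mu_t - S_t)_{t=1}^n$ is a bounded martingale difference sequence (with $|\mu_t - S_t| \leq 2$ once $c$ and $y,p$ are bounded by $1$). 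Azuma--Hoeffding then gives $\tfrac1n|\sum_t(\mu_t - S_t)| \leq \sqrt{8\log(2/\delta')/n}$ with probability $1-\delta'$, and a union bound over the finite class $\cC$ with $\delta' = \delta/|\cC|$ upgrades this to the claimed $\sqrt{(\log|\cC| + \log(1/\delta))/n}$ rate simultaneously for all $c \in \cC$. Summing the two bounds yields the theorem.

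I expect the martingale step to be the only real obstacle: one must resist treating $(x_t,p_t,y_t)_{t=1}^n$ as i.i.d.\ — they are not, since the distribution $\cD(h_t)$ they are drawn from depends on the entire random history through $h_t$ — and instead recognize that deploying $\cA$'s current state and taking a single fresh performative sample is precisely what makes ``realized minus population calibration error'' a martingale difference adapted to $\{\cF_t\}$. The worst-case-sequence strength of $\cA$ is exactly what legitimizes this, since $\cA$ must tolerate the non-i.i.d., history-dependent, self-negating sequence generated by the feedback loop. Everything else — the mixture bookkeeping, the pathwise regret bound, the Azuma inequality, the union bound, and the running-time accounting — is routine.
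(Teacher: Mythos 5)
Your proposal is correct and follows essentially the same route as the paper's proof: the uniform-mixture decomposition, the observation that the regret bound holds pathwise against the performative feedback sequence, the martingale-difference structure of realized-minus-population calibration error with respect to the history filtration, Azuma--Hoeffding, and a union bound over $\cC$. The only cosmetic difference is that the paper treats each round's predictor $f_t$ as deterministic (the randomness of $f_\cA$ coming solely from the uniform mixing), whereas you allow $h_t$ to be internally randomized and fold $\cA$'s coins into $\cF_{t-1}$; this is a harmless generalization that does not change the argument.
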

The result establishes an efficient reduction from our main problem to an online multicalibration problem for which numerous algorithms now exist \citep{vovk2007non, foster2006calibration, infinity,gupta2022online,okoroafor2025near,perdomo2025defense}. In particular, these algorithms can \emph{efficiently} achieve the online guarantee in \Cref{eq:inf_online} for various rich classes of functions $\cC$ such as low-degree polynomials, decision trees, or, more generally, any $\cC$ that is (weak agnostically) learnable or that belongs to a reproducing kernel Hilbert space. Moreover, these algorithms have $\sqrt{T}$ regret, implying that $f$ achieves the optimal $\cO(n^{-1/2})$ rate for this problem.
The reduction is conceptually simple, yields tight bounds on the generalization error, and requires no Lipschitzness conditions on the distribution map $\cD(\cdot)$. That is, the conditional distribution over outcomes $\cD_y(x,p)$ can be any discontinuous, non-smooth function of the forecast $p$. We only need to assume that outcomes lie in a bounded range.

Taking a step back, the theorem states that there are learning algorithms that can efficiently cope with social feedback. They make predictions that actively shape future events while still providing an honest, calibrated signal of the outcome. 
The fact that these prediction rules exist is fascinating and motivates us to ask further questions. Are these prediction rules truly desirable? What does it mean to make a good prediction if the prediction itself influences outcomes?
Part of the answer to this question involves rethinking what the goals of prediction should be and how these broader goals are reflected in our choice of loss function. Please see \cite{kim2022making} and \cite{miller2021outside} for further discussion of this point. 

However, leaving the choice of loss function aside and restricting ourselves to predictive accuracy as the main criterion, prediction rules satisfying the historical desiderata formalized in \Cref{def:formal_indistinguishability} can be arbitrarily poor.
We show that it is possible for a predictor $f$ to be performatively multicalibrated with respect to \emph{all (continuous) functions} $c(x,p)$ while simultaneously explaining none of the variance in outcomes. 
We state the following result in terms of the core solution concepts from the performative prediction framework ---performative stability and performative optimality --- whose definitions we review below.\footnote{The solution concepts of performative stability and optimality are defined with respect to general loss functions $\ell(x,y; h)$ that can capture a wide variety of higher level objectives. For instance, we can choose to have losses which encourage forecasts to match outcomes, $\ell(x,y; h) =(h(x)-y)^2$, or losses which encourage the induced distributions $\cD(f)$ towards a particular outcome $\ell(x,y; h) =-y$.  See \cite{kim2022making}. Since the goal here is to understand the relationship between \Cref{def:formal_indistinguishability} and predictive accuracy in performative contexts, we specialize the definitions to the case of squared loss for simplicity.}

\begin{theorem*}[Informal]
Assume that outcomes $y$ are binary and let $\cH \subseteq \{\cX \rightarrow [0,1]\}$ be an arbitrary benchmark class. Any predictor $f$ that is performatively multicalibrated with respect to $\cH$ (satisfies \Cref{def:formal_indistinguishability} with $\cC= \cH$) is also performatively stable with respect to $\cH$ and hence satisfies:
\begin{align}
\label{eq:performative_stability_inf}
	\E_{(x,p,y) \sim \cD(f)} (y - p)^2 \leq \min_{h \in \cH} \E_{(x,y) \sim \cD(f)} (y - h(x))^2.
\end{align}
However, there exist a distribution map $\cD(\cdot)$ such that $f$ can be performatively multicalibrated with respect to all bounded functions $c(x,p)$, yet simultaneously maximize the performative risk for any $\cH$,  
\begin{align}
\label{eq:performative_optimality_inf}
	\E_{(x,p,y)\sim \cD(f)} (y - f(x))^2 \geq \max_{h \in \cH} \E_{(x,y)\sim \cD(h)} (y - h(x))^2. 
\end{align}
\end{theorem*}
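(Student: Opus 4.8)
The plan is to prove the two halves separately: the first is a one-line algebraic identity followed by nonnegativity of a square, and the second is an explicit (deliberately extreme) construction of the distribution map.

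\textbf{Performative stability (Eq.~\eqref{eq:performative_stability_inf}).} Fix any $h \in \cH$ and sample $(x,p,y) \sim \cD(f)$. Writing $y - h(x) = (y - p) + (p - h(x))$ and expanding the square gives the identity
\begin{align*}
\E_{\cD(f)}(y - h(x))^2 - \E_{\cD(f)}(y - p)^2 = 2\,\E_{\cD(f)}\big[(y-p)(p - h(x))\big] + \E_{\cD(f)}\big[(p - h(x))^2\big].
\end{align*}
The last term is nonnegative, so it suffices to show the cross term is small in absolute value. Using $(y-p)(p-h(x)) = p\,(y-p) - h(x)\,(y-p)$, it splits into a calibration term $\E_{\cD(f)}[p(y-p)]$ and a multiaccuracy term $\E_{\cD(f)}[h(x)(y-p)]$. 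Both have the form $\E_{\cD(f)}[c(x,p)(y-p)]$ for a bounded $c$ that is controlled by performative multicalibration with respect to $\cH$: the multiaccuracy term directly with $c = h$, and the calibration term because the formal notion of multicalibration refines multiaccuracy by the prediction level $p$ (equivalently, one may just take $\cC$ to contain $c(x,p) = p - h(x)$, which is bounded since $p, h(x) \in [0,1]$). Hence each is at most $\eps$ in magnitude, the cross term is at least $-O(\eps)$, and letting $\eps \to 0$ (or carrying an $O(\eps)$ slack) gives $\E_{\cD(f)}(y-p)^2 \le \E_{\cD(f)}(y-h(x))^2$. Minimizing over $h \in \cH$ yields Eq.~\eqref{eq:performative_stability_inf}, which is precisely the assertion that $f$ is performatively stable with respect to $\cH$.

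\textbf{Failure of performative optimality (Eq.~\eqref{eq:performative_optimality_inf}).} Let $\cX$ be a single point (so predictors are numbers in $[0,1]$; the construction is unchanged for any $\cD_x$) and take the \emph{maximally self-fulfilling} distribution map $\cD_y(x,p) = \Ber(p)$: whatever probability is forecast becomes the true probability of the event. Let $f \equiv \tfrac12$. Three observations finish the argument. First, $f$ is $0$-performatively multicalibrated with respect to \emph{all} bounded $c(x,p)$: under $\cD_y(x,p) = \Ber(p)$ we have $\E[y - p \mid x, p] = 0$ for \emph{every} predictor, so $\E_{\cD(f)}[c(x,p)(y-p)] = \E_{\cD(f)}[c(x,p)\,\E[y-p\mid x,p]] = 0$. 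Second, the performative risk of $f$ is $\E_{\cD(f)}(y - f(x))^2 = \E_{y \sim \Ber(1/2)}(y - \tfrac12)^2 = \tfrac14$. Third, for \emph{any} deterministic $h$, deploying $h$ makes $y \sim \Ber(h(x))$, so $\E_{\cD(h)}(y - h(x))^2 = \E_x[h(x)(1-h(x))] \le \tfrac14$. Combining the second and third observations, $\E_{\cD(f)}(y-f(x))^2 = \tfrac14 \ge \max_{h\in\cH}\E_{\cD(h)}(y-h(x))^2$ for \emph{every} benchmark class $\cH$, which is Eq.~\eqref{eq:performative_optimality_inf}. Moreover, since $y \sim \Ber(\tfrac12)$ under $\cD(f)$ has variance $\tfrac14$, the fraction of variance explained by $f$ is $1 - \frac{\E_{\cD(f)}(y-f(x))^2}{\Var_{\cD(f)}(y)} = 1 - \frac{1/4}{1/4} = 0$; by the first half of the theorem this $f$ is also performatively stable with respect to any $\cH$, so performative stability and optimality can be as far apart as possible (indeed here $h \equiv 0$ incurs zero performative risk).

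\textbf{Main obstacle.} Part 1 is routine once one pins down which distinguishers the formal definition of performative multicalibration supplies — it must include calibration in the prediction level, not merely multiaccuracy over $\cH$. The only genuine design choice is in Part 2: to obtain the inequality \emph{for every} $\cH$ one needs the entire performative-risk landscape to be dominated by the multicalibrated predictor's risk, and $\cD_y(x,p)=\Ber(p)$ achieves exactly this, since every predictor's performative risk equals $\E_x[h(x)(1-h(x))] \le \tfrac14$, with the maximum attained by $f \equiv \tfrac12$. If one prefers an example in which multicalibration is not trivially satisfied by all predictors, the map $\cD_y(x,p) = \Ber(\tfrac14 + \tfrac p2)$ flattens the risk landscape to the constant $\tfrac14$ and gives the same conclusion.
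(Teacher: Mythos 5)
Your proof is correct for both halves, but takes a genuinely different route from the paper in each, so a comparison is worth making.

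For the stability half, you expand $\E(y-h(x))^2 - \E(y-p)^2$ directly as a sum of a nonnegative quadratic term and a cross term, then bound the cross term via two multicalibration tests ($c=p$ and $c=h$). The paper instead factors the argument through proper scoring rules: \Cref{lemma:loss_oi} shows that two loss-outcome-indistinguishability conditions imply stability for \emph{any} proper loss, and \Cref{lemma:relating_calibration} shows that for squared loss those conditions reduce to the multicalibration tests $c(x,p)=p-\tfrac12$ and $c(x,p)=h(x)-\tfrac12$. Your bias--variance--type expansion is more elementary and self-contained, at the cost of being tied to squared loss; the paper's route buys generality across proper losses (which it then specializes away in the final theorem). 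You also correctly flagged that the informal statement's ``$\cC=\cH$'' is a slight abuse --- the formal \Cref{thm:stability} indeed augments $\cC$ with the prediction-level test, exactly as you require.

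For the counterexample half, your self-fulfilling map $\cD_y(x,p)=\Ber(p)$ is a cleaner and in fact stronger construction: it gives multicalibration error \emph{exactly} zero for \emph{all} bounded $c$ (not only continuous ones, and not merely up to $\eps$), and it gives the maximum $\tfrac14$ with equality, while the performative risk ranges all the way down to $0$. The paper instead uses the discontinuous self-negating map $g(p)=p\pm 0.01$ and a two-point randomized predictor that straddles $p=\tfrac12$. That construction is deliberately chosen to illustrate an additional point --- no \emph{deterministic} predictor is calibrated there, so randomization is genuinely necessary --- and it keeps the minimum performative risk strictly positive ($0.01$), but it also makes the multicalibration bound a limiting argument in $\alpha$ (and restricted to $c$ continuous in $p$), whereas yours is immediate. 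Both constructions establish the theorem; yours is the simpler witness for the claim as literally stated, and the paper's is tuned to make the extra pedagogical points in \Cref{fig:discontinuous_d} and the surrounding discussion.
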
 
The theorem states that any predictor that is performatively multicalibrated with respect to $\cH$ is also performatively stable with respect to $\cH$. A predictor $f$ is performatively stable if it induces a distribution $\cD(f)$ such that no model $h \in \cH$ has a lower loss over $\cD(f)$. This is \Cref{eq:performative_stability_inf}.
If we judge the performance of a performatively stable predictor $f$ (and the alternatives $h$) purely based on the distribution $\cD(f)$, there is no reason to switch to an alternative model in $\cH$ since these have higher loss over $\cD(f)$.

However, stability ignores the fact that different predictors induce different distributions. Note that the expectation in the right hand side of the stability condition, \Cref{eq:performative_stability_inf}, is taken over $\cD(f)$ not $\cD(h)$. 
The true measure of performance for a predictor $h$ in performative contexts is its \emph{performative risk} -- the expected loss over its own induced distribution -- formally defined as $\E_{\cD(h)}(y - h(x))^2$. Note that the predictor in the loss and the predictor in the distribution map $\cD(\cdot)$ are now the same. A model $f$ is performatively optimal with respect to a class $\cH$ if it has lower performative risk than any model in $h \in \cH$. 
\Cref{eq:performative_optimality_inf} states that it is possible for a predictor to be performatively multicalibrated with respect to all functions $c(x,p)$ while simultaneously \emph{maximizing} the performative risk.\footnote{There is a non-trivial gap for this problem, $\min_h \E_{\cD(h)} (y - h(x))^2 \ll \max_h \E_{\cD(h)} (y - h(x))^2.$ See \Cref{sec:example}.} 

Note that our intuition from supervised learning is exactly reversed. In supervised learning, where $(x,y)$ are drawn from a fixed distribution $\cD$, if $f(x) = \E[y\mid x]$ for all $x$, then $\E_{\cD}(y - f(x))^2 \leq \E_{\cD}(y- h(x))^2$ for any function $h$. In performative prediction, where $f$ influences the distribution over $(x,y)$, the result shows that it is possible for $f(x) = \E_{\cD(f)}[y \mid x]$ for every $x$, yet $\E_{\cD(f)}(y - f(x))^2 \geq \E_{\cD(h)}(y- h(x))^2$. The inequality has now been flipped because of performativity.

As a whole, our results advance the mechanics and algorithmic foundations of social prediction, explaining how and why one can find predictors that dynamically shape yet also rigorously predict social outcomes. In doing so, we shed new light on old questions previously considered by leading researchers of the 20th century. Lastly, by re-evaluating historical desiderata using a modern lens, we highlight ways in which previous debates fall short and inspire new debate regarding what the broader goals of prediction in the social world should be.

\subsection{Related Work}

\paragraph{Social Science.} As discussed in the introduction, several authors across the social sciences have studied the question at the heart of our work regarding the predictability of social events. In addition to \citet{morgenstern1928wirtschaftsprognose}, \citet{simon1954bandwagon} states this exact problem in the introduction to his paper, where he remarks that he learned about the issue from \citet{hayek1944scientism}. The existence, but not computation, of a predictor solving this problem is also considered in \citet{grunberg1954predictability}. Their paper inspired our choice of title.

Our problem also lies at the heart of the famous Lucas critique \citep{lucas1976econometric}, which marked a turning point in macroeconomic theory.
Outside of economics, this problem has been extensively studied in sociology dating back to work by \citet{merton1948self} and more recently by \citet{mackenzie2008engine}. In philosophy, it is studied in \citet{buck1963reflexive}. Our results complement prior work by advancing our algorithmic understanding of the problem and establishing how these prediction problems can be efficiently solved both statistically and computationally. 

\paragraph{Performative Prediction.} The area of performative prediction was initiated by \citet{perdomo2020performative}, who proposed a formal framework to study predictions that shape data distributions. We cannot cover all the work in this growing field, but we point the reader to the excellent survey by \citet{hardt2023performative} for a broader overview. Within this literature, our results are most closely related to the literature on finding performatively stable points \citep{mendler2020stochastic,drusvyatskiy2023stochastic, mofakhami2023performative,oesterheld2023incentivizing,mofakhami2024performative,taori2023data,khorsandi2024tight}. Relative to these analyses, our results differ since we make no smoothness assumptions on the way forecasts influence outcomes and restrict ourselves to the outcome performative setting where predictions only influence the distribution over the outcomes, but not features.

\paragraph{Multicalibration \& Outcome Indistinguishability.} Our results also build on the recent lines of work on multicalibration \citep{hkrr} and outcome indistinguishability \citep{oi},  further extending the foundations of these ideas to non-supervised learning settings. 
Our proofs heavily rely on new algorithmic insights from online calibration, pioneered by \citet{foster1998asymptotic}, and extended by \citep{foster2006calibration, kakade2008deterministic,vovk2007non, gupta2022online} and others. 
Within this broad literature, our work is most closely related to \citet{kim2022making}, who used these tools developed in \citet{gopalan2022loss} to compute performatively \emph{optimal} (not stable) predictors for a restricted version of the outcome performativity setting. In their setup, outcomes are binary and their conditional distribution is a function of $x$ and a discrete decision $\yhat$ belonging to a finite set. Our setup generalizes theirs since both outcomes $y$ and forecasts $p$ can be real-valued.

\section{Preliminaries}
\label{sec:preliminaries}
Before presenting our main results, we review some technical preliminaries and provide some background and motivation behind our core solution concepts. 

\paragraph{The Distribution Map.} Throughout our work, we assume that data $(x,p,y) \sim \cD(f)$ is generated according to the following process. The learner publishes a predictor $f: \cX \rightarrow \Delta([0,1])$ mapping features $x$ to a distribution $f(x)$ over the unit interval $[0,1]$. Features $x$ are drawn i.i.d from a fixed distribution $\cD_x$ over an arbitrary set $\cX$, predictions $p \in [0,1]$ are sampled from $f(x)$, $p\sim f(x)$. Outcomes $y$ are sampled from $\cD_y(x,p)$ where $\cD_y(x,p)$ is any distribution over the unit interval $[0,1]$. Our results apply to any $\cD_y(x,p)$ supported on a bounded interval by rescaling.
 
\paragraph{Performative Multicalibration.}
Calibration is the \emph{sine qua non} definition of validity for a probabilistic forecast \citep{dawid1985calibration}. For  binary $y$, a predictor is calibrated if conditional on $f(x) = v$ for $v \in [0,1]$, the outcome $y$ occurs a $v$ fraction of the time, $\Pr[y=1\mid f(x) = v] = v$. 

Multicalibration \citep{hkrr} is a strengthening of calibration, requiring that $f(x) =y$, not just overall, but also once we condition on $x$ belonging to any set $G$ in a collection $\cG$: $\Pr[y=1\mid f(x) = v, x \in G] = v$ for all $G \in \cG$. This condition is equivalent, up to a normalization factor of $\Pr[f(x)=v, x \in G]$, to the one we wrote in \Cref{def:formal_indistinguishability} since letting $c_{G,v}(x,p)= 1\{x \in G, p =v\}$, we can write:
\begin{align*}
	\E_{(x,p,y) \sim \cD(f), p\sim f(x)}[c_{G,v}(x,p)(y-p)]= (\Pr[y=1\mid f(x) = p, x \in G]  - v) \Pr[f(x)=v, x \in G].
\end{align*}
Conversely, a predictor $f$ is outcome indistinguishable (or OI) \citep{oi} if it establishes a generative model that cannot be falsified on the basis of a prespecified collection of tests or distinguishers $A \in \cF \subseteq \{\cX \times [0,1] \times [0,1] \rightarrow \R\}$. Simplifying our discussion to the case of a binary outcome, each distinguisher takes features $x$, predictions $p$, an outcome $y$ and outputs 1 or 0 (i.e. is this a real outcome/prediction for $x$). A predictor is OI with respect to $\cF$ if all the distinguishers in the collection $\cF$ behave the same when given the true outcome $y \sim \cD(f)$ versus an outcome sampled from the model $\tilde{y} \sim \Ber(p), p \sim f(x)$.
Extended to the performative context, $f$ is outcome indistinguishable with respect to $\cF$ if for all $A \in \cF$
\begin{align}
\label{eq:oi_pp}
 \E_{(x,p,y) \sim \cD(f)}A(x, p, y)  \approx  \E_{x \sim \cD_x p\sim f(x), \tilde{y} \sim \Ber(p)}A(x, p, \tilde{y}) 
\end{align}
For binary $y$, this guarantee is, in fact, equivalent to our performative multicalibration guarantee from \Cref{def:formal_indistinguishability} since this above equation is true if and only if $\E_{(x,p,y) \sim \cD(f)}[c(x,p)(y-p)]=0$ for $c_A(x,p) = A(x,p,1) - A(x,p,0)$, as seen in \citet{oi} for the supervised learning case.

We feel that this rewriting is particularly insightful. 
Note that outcomes on the left of \Cref{eq:oi_pp} are performative: They are sampled from $\cD(f)$. However, outcomes on the right hand side are sampled according to $f$. The predictor $f$ is actively influencing the distribution in such a way that outcomes behave as if they were sampled according to its own predicted distribution. By observing samples $(x,y)\sim \cD(f)$, we might seemingly believe that $f$ is not influencing the data at all! It's just a great predictor. 

However, this intuition is false. A predictor $f$ can pass \emph{all} bounded tests $\cF$ and be multicalibrated with respect to any collection $\cC$  while simultaneously explaining none of the variance in $y$. We present this construction in \Cref{sec:example}.

\section{Algorithmic Results}

This section presents our core algorithmic results, illustrating how one can find prediction rules that influence data and are indistinguishable from the true outcomes. These algorithmic procedures are conceptually simple, computationally efficient, and near statistically optimal. 

\paragraph{Technical Overview.} Recall that the goal is to find a prediction rule $f$ satisfying the following indistinguishability guarantee from \Cref{def:formal_indistinguishability}. For all $c \in \cC$,
\begin{align}
\label{eq:informal_indistinguishability}
	\big|\E_{(x,p,y) \sim \cD(f)} [(y - p) c(x,p)] \big| \leq \eps.
\end{align}
Rather than solving this problem directly, we reduce it to a seemingly harder, \emph{online} problem and then perform an online-to-batch conversion. That is, we show that any online algorithm $\cA$ that (deterministically) produces predictions $p_t = f_t(x_t)$ such that
\begin{align*}
	|\sum_{t=1}^T c(x_t, p_t) (y_t - p_t)| \leq o(T),
\end{align*}
for any adversarial sequence of $\{(x_t,y_t)\}_{t=1}^T$ can be converted to a batch predictor $f$ satisfying the indistiguishability guarantee from \cref{eq:informal_indistinguishability}.
We begin by formally defining the online protocol used in our reduction.

\begin{definition}[Online Prediction]
\label{def:online_protocol}
Online prediction is a sequential, two-player game between a Learner and Nature. 
At every round $t$, the Learner moves first and selects a function $f_t: \cX \rightarrow [0,1]$, deterministically mapping features $x \in \cX$ to predictions $p_t$ in $[0,1]$, as a function of the history up until time $t$. Nature moves second and selects $(x_t, y_t) \in \cX \times \cY$ with knowledge of $x_t$ and $p_t = f_t(x_t)$.  
We refer to the sequence $\{(x_t, y_t, f_t, p_t)\}_{t=1}^T$ as the transcript of the game. 
\end{definition}

Online prediction is a classical problem in statistics, game theory, and machine learning for which numerous algorithms have been developed \citep{cesa2006prediction,foster1998asymptotic,kakade2008deterministic}.
Our definition differs slightly from traditional presentations since we assume that the learner moves first and commits to a function $f_t$ at every round before seeing $x_t$ (instead of moving second and producing a prediction $p_t$ after seeing the features $x_t$). However, this is just a difference in style, not substance. Most online algorithms in the literature also commit to a prediction rule $f_t: \cX \rightarrow [0,1]$ at every round before seeing the features $x_t$. Because $f_t$ is revealed, Nature knows $p_t=f_t(x_t)$ for every $x_t$ and can use this information (potentially adversarially) when choosing $x_t$ and $y_t$.
Moving on, our results rely on algorithms that achieve the following guarantee in online prediction.

\begin{definition}[Online Multicalibration]
\label{def:online_oi}
Let $\cC \subseteq \{ \cX \times [0,1] \rightarrow \R\}$ be a class of functions. 
An algorithm $\cA$ guarantees online multicalibration with respect to a set $\cC$ at a rate bounded by $\Regret_{\cA}(\cdot)$ if it always generates a sequence of functions $f_t$ for the Learner that, no matter Nature's strategy in the online protocol (\cref{def:online_protocol}), will yield a transcript $(x_t,y_t, f_t,p_t)$ satisfying,
\begin{align*}
	|\sum_{t=1}^T c(x_t, p_t)(p_t - y_t)|  & \leq \Regret_{\cA}(T),
\end{align*}
for all $c \in \cC$ where  $\Regret_{\cA}(T): \N \rightarrow \R_{\geq 0}$ is $o(T)$.
\end{definition}

Algorithms that achieve this guarantee date back to the work \citet{vovk2007non, vovk2005defensive2} and \citet{foster2006calibration} (they refer to it under different names like \emph{resolution} or just \emph{calibration}). Following the work of \citet{hkrr} introducing multicalibration, there has been a flurry of recent papers introducing new algorithms for this problem \citep{infinity, gupta2022online,garg2024oracle,okoroafor2025near}. We will make use of these procedures when instantiating our general results.
As a final preliminary step, we define the online-to-batch procedure we use in our analysis:

\begin{definition}[Batch Converstion]
\label{def:batch_conversion}
Fix an outcome performative distribution map $\cD(\cdot)$. Let $\{(x_t, y_t, f_t, p_t)\}_{t=1}^T$ be the transcript generated in the online prediction protocol (\Cref{def:online_protocol}) where at every time step $t$, the Learner chooses $f_t$ according to $\cA$, and Nature selects features $x_t$ and outcomes $y_t$, by sampling them from the distribution map: $x_t \sim \cD_x, y_t \sim \cD(x_t, p_t)$ where $p_t = f_t(x_t)$.

Define the $T$-round batch version of  $\cA$, $f_{\cA}: \cX \rightarrow \Delta([0,1])$ to be the randomized predictor that given $x$, selects $f_i \in \{f_1, \dots, f_T\}$ from the transcript uniformly at random, and then predicts $p=f_i(x)$.
\end{definition}

This style of online-to-batch conversion where one uniformly randomizes over all previous predictors is standard in the online learning literature and is the typical starting point when converting online algorithms to batch learners (see, e.g., \citet{gupta2022online,okoroafor2025near}). The main difference in this construction is that samples $(x_t, y_t)$ are not drawn i.i.d from a fixed distribution $\cD$, but rather from the distribution $(x_t,y_t) \sim \cD(f_t)$ induced by the predictions.
Note that while each of the individual functions $f_t \in \{f_1, \dots, f_t\}$ in the transcript are deterministic, the batch predictor is \emph{randomized} since it first mixes over the choice of $f_t$.
With these definitions out of the way, we can now state the main theorem for this section:

\begin{theorem}
\label{thm:reduction}
	Let $\cC \subseteq \{ \cX \times [0,1] \rightarrow [-1,1]\}$ be a class of functions and 
let $\cA$ be an algorithm that guarantees online multicalibration with respect to $\cC$ at a rate bounded by $\Regret_{\cA}(\cdot)$ (\Cref{def:online_oi}). 
Define $f_{\cA}$ to be the batch predictor of $\cA$ trained on $n$ rounds of interaction (see \Cref{def:batch_conversion}). Then, with probability $1-\delta$ over the samples drawn from $\cD(\cdot)$, for all $c \in \cC$,
\begin{align*}
	\big| \E_{(x,p,y) \sim \cD(f_{\cA})} c(x, p) (p - y) \big| \leq 
  \frac{\Regret_{\cA}(n)}{n} + 4\sqrt{\frac{\log(|\cC|) + \log(1/\delta)}{n}}. 
\end{align*}
\end{theorem}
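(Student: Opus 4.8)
The plan is to condition on the (random) transcript $\{(x_t,y_t,f_t,p_t)\}_{t=1}^n$ produced by running $\cA$ against the stochastic Nature of \Cref{def:batch_conversion}, and then to split the target quantity into a \emph{regret} term that the online guarantee controls deterministically and a \emph{generalization} term that I would control by martingale concentration.

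First I would unfold the definition of $f_\cA$. Fixing the transcript, \Cref{def:batch_conversion} says that sampling $(x,p,y)\sim\cD(f_\cA)$ is the same as drawing $i\sim\Unif([n])$, $x\sim\cD_x$, $p=f_i(x)$, $y\sim\cD_y(x,p)$. So if I write $\mu_i(c):=\E_{x\sim\cD_x,\,y\sim\cD_y(x,f_i(x))}[c(x,f_i(x))(f_i(x)-y)]$, then $\E_{(x,p,y)\sim\cD(f_\cA)}[c(x,p)(p-y)]=\tfrac1n\sum_{i=1}^n\mu_i(c)$ exactly, and it suffices to bound $|\tfrac1n\sum_i\mu_i(c)|$ uniformly over $c\in\cC$ on a $(1-\delta)$ event. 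Next I would add and subtract the realized transcript sum, $\tfrac1n\sum_i\mu_i(c)=\tfrac1n\sum_i c(x_i,p_i)(p_i-y_i)+\tfrac1n\sum_i(\mu_i(c)-c(x_i,p_i)(p_i-y_i))$. The first sum is at most $\Regret_\cA(n)/n$ in magnitude, for every $c\in\cC$ and on every transcript, directly from \Cref{def:online_oi}: the key point is that $\cA$'s guarantee holds against an \emph{arbitrary} adaptive Nature, hence against the particular stochastic Nature used here, so nothing extra is needed.

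The heart of the argument is bounding the second sum. Let $\cH_{t-1}$ be the $\sigma$-algebra generated by rounds $1,\dots,t-1$ and $\cA$'s internal randomness (if any). Because $\cA$ is an online algorithm, $f_t$ — and hence $p_t=f_t(x_t)$ as a function of $x_t$ — is $\cH_{t-1}$-measurable, while $x_t\sim\cD_x$ and $y_t\sim\cD_y(x_t,p_t)$ are drawn fresh; therefore $Z_t(c):=c(x_t,p_t)(p_t-y_t)-\mu_t(c)$ obeys $\E[Z_t(c)\mid\cH_{t-1}]=0$ and $|Z_t(c)|\le 2$ (from $|c|\le 1$ and $p_t,y_t\in[0,1]$). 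So $\{Z_t(c)\}_{t=1}^n$ is a bounded martingale difference sequence, and I would apply Azuma--Hoeffding to get $\Pr[|\sum_t Z_t(c)|>\lambda]\le 2e^{-\lambda^2/(8n)}$ for each fixed $c$, take $\lambda$ of order $\sqrt{n(\log|\cC|+\log(1/\delta))}$, and union-bound over the finite class $\cC$ to obtain $\tfrac1n|\sum_i Z_i(c)|\le 4\sqrt{(\log|\cC|+\log(1/\delta))/n}$ for all $c$ with probability $1-\delta$ — the constant $4$ comfortably absorbing the $\log 2$ from the union bound and the $\sqrt8$ from Azuma. Combining the three pieces gives the stated bound. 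For infinite $\cC$ I would instead discretize and pay the corresponding covering-number term, but the theorem as stated only asks for the finite case.

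I expect the martingale step to be the only subtle point, and it is where the outcome-performative structure gets used: I need that, conditioned on the history, the fresh draw $(x_t,y_t)$ makes $c(x_t,p_t)(p_t-y_t)$ an \emph{unbiased} estimate of exactly the per-round quantity $\mu_t(c)$ that shows up inside the $\cD(f_\cA)$-expectation. This works precisely because $\cD_y(x,p)$ depends on the past only through $f_t$, which is already determined when $(x_t,y_t)$ are sampled, so performativity is entirely local to each round and no Lipschitzness or continuity of $\cD(\cdot)$ is needed. The remaining work — the online-to-batch bookkeeping and the Hoeffding estimate — is routine.
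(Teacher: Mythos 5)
Your proposal is correct and mirrors the paper's proof essentially step for step: the same unfolding of $f_\cA$ into a uniform mixture over the transcript, the same split into a regret term controlled deterministically by \Cref{def:online_oi} and a martingale-difference term controlled by Azuma--Hoeffding, and the same union bound over $\cC$ to get the stated constant. No substantive differences to note.
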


\begin{proof}[Proof Sketch]
The proof follows the standard template for online-to-batch conversions. Since outcomes are a function of $x$ and specific prediction $p$ we can use the definition of $f_{\cA}$ to decompose the left hand side as: 
\begin{align*}
	\E_{\substack{x \sim \cD_x, p \sim f_\cA(x) \\ y \sim \cD_y(x,p)}} c(x, p) (p - y) =  \frac{1}{n}\sum_{i=1}^n \E_{\substack{x \sim \cD_x \\ y \sim \cD_y(x,f_i(x_i))}}  [c(x, p) (p - y)  \mid f_{\cA} = f_i],
\end{align*}
Then, viewing the transcript $\transcriptp$ as a stochastic process, we use a Martingale argument and apply the Azuma-Hoeffding inequality to establish a high-probability upper bound. In particular, we can bound the sum by the online OI error of the algorithm $\cA$,
\begin{align*}
\sum_{i=1}^n \E_{(x,y) \sim \cD(f_i), p \sim f_i(x)}  [c(x, p) (p - y)  \mid f_{\cA} = f_i] \leq  \sum_{i=1}^n c(x_i, p_i) (p_i - y_i) + \cO(\sqrt{n\log(1/\delta)})  
\end{align*}
The result follows by taking a union bound over $\cC$ and replacing the sum on the right hand side with the regret bound.
\end{proof}

As we mentioned, \cref{thm:reduction} is a general reduction. It states that any algorithm that is online multicalibrated with respect to a class of functions $\cC$ yields a (batch) prediction rule $f_{\cA}$ that is \emph{performatively} multicalibrated with respect to $\cC$. The reduction, moreover, is statistically efficient: the error decreases at the optimal $\cO(n^{-1/2})$ rate and the dependence on the failure probability $\delta$ is logarithmic.\footnote{The dependence on $n$ cannot be improved without further assumptions. In particular, take the case where $\cC$ only contains the constant one function, $y$ is binary, and $\cD(\cdot)$ is not performative so that $(x,y)\sim \cD_*$ where $\cD_*$ is a fixed distribution. In this case, the problem becomes mean estimation for a Bernoulli random variable, which has a well-known $\Omega(n^{-1/2})$ lower bound (see e.g. \citet{anthony2009neural})} The $\log(|\cC|)$ dependence comes from a standard union bound argument and can be sharpened using well-known techniques.\footnote{One can, under further assumptions, replace $\log(|\cC|)$ with a norm-based bound \citep{cesa2004generalization}.} 

The reduction is also \emph{computationally} efficient. If the runtime of the online algorithm $\cA$ is bounded by $\timetext(t)$ at time step $t$, then the online-to-batch conversion takes time $ \cO(n \cdot \timetext(n))$. Therefore, if $\cA$ runs in polynomial time, so does the batch version $f_{\cA}$.

\paragraph{Example Instantiations.} Having introduced the main result, we now illustrate how it can be instantiated using existing algorithms to guarantee multicalibration with respect to rich classes of functions $\cC$. 
Since there are by now many different online algorithms that satisfy \cref{def:online_oi}, these examples are by no means meant to be exhaustive. We simply wish to illustrate some interesting cases with the understanding that there are numerous alternatives. Our end-to-end results can be improved as the community develops online algorithms with sharper regret bounds or better runtimes that can be plugged into \Cref{thm:reduction}.

In particular, we instantiate our general reduction using the K29 algorithm from \cite{vovk2005defensive2,vovk2005defensive1}. This algorithm is a simple, kernel-based procedure which can efficiently guarantee calibration with respect to functions in a Reproducing Kernel Hilbert Space. We present a self-contained description and analysis of the algorithm in \Cref{sec:K29}. The following corollary summarizes some of its implications.
\begin{corollary}
\label{corr:online_examples}
The following statements are true.
\begin{enumerate}[a)]
	\item  Any Finite Collection. Let $\cC \subseteq \{\cX \times[0,1] \rightarrow [-1,1]\}$ be any finite set of functions $c(x,p)$ that are continuous in the forecast $p$. Then, there exists an explicit choice of kernel, such that the K29 algorithm
is online multicalibrated with respect to $\cC$ at a rate bounded by $\sqrt{T |\cC|}$. Consequently, the batch version $f$ trained on $n$ rounds of interaction satisfies:
\begin{align*}
	\big| \E_{(x,y) \sim \cD(f), p\sim f(x)} c(x, p) (p - y) \big| \leq \sqrt{\frac{|\cC|}{n}} + 4\sqrt{\frac{\log(|\cC|) + \log(1/\delta)}{n}} \quad \text{ for all } c \in \cC.
\end{align*}
Furthermore, the per-round run-time is $\widetilde{\cO}(t \cdot |\cC|)$. Hence, $f$ runs in time $\cO(n^2 |\cC|)$. 

\item  Linear Functions. Define $\cX = \{x \in \R^d: \| x\|_2 \leq 1\}$, $\cC_{\mathrm{lin}} = \{\theta^\top x + p : \|\theta\|_2 \leq 1\}$, and let $\cC$ be a finite subset of $\cC_{\mathrm{lin}}$. Then, there exists a choice of kernel such that the K29 algorithm is online multicalibrated with respect to $\cC_{\mathrm{lin}}$ at a rate bounded by $\sqrt{2T}$. 
Consequently, the batch version $f$ of this procedure trained on $n$ rounds of interaction satisfies:
\begin{align*}
	\big| \E_{(x,y) \sim \cD(f), p\sim f(x)} c(x, p) (p - y) \big| \leq \sqrt{\frac{2}{n}} + 4\sqrt{\frac{\log(|\cC|) + \log(1/\delta)}{n}} \quad \text{ for all } c \in \cC.
\end{align*}

Furthermore, the per-round run-time is bounded by $\widetilde{\cO}(t \cdot d)$. Hence, $f$ runs in time $\cO(n^2d)$.
  
\item  Low-Degree Boolean Functions. Define $\cX = \{0,1\}^d$, and $\cC_{\mathrm{LowDeg}} \subseteq \{ \cX \rightarrow [-1,1]\}$ to be the set of Boolean functions of degree $s$. That is, those that can be written as, 
\begin{align*}
	c(x) = \sum_{S \subseteq [d], |S|\leq s} \alpha_S \prod_{i \in S} x_i
\end{align*}
for some coefficients $\{\alpha_S\}_{S \subset [n]} \in \R$. Let $\cC$ be any finite subset of $\cC_{\mathrm{LowDeg}} \cup \{c(x,p)=p\}$. 
Then, there exists a choice of kernel such that the K29 algorithm is online multicalibrated with respect to $\cC_{\mathrm{LowDeg}} \cup \{c(x,p)=p\}$ at rate bounded by $10 \sqrt{d^s \cdot T}$. 

Consequently, the batch version $f$ of this procedure trained on $n$ rounds of interaction satisfies:
\begin{align*}
	\big| \E_{(x,y) \sim \cD(f), p\sim f(x)} c(x, p) (p - y) \big| 
	\leq 10\sqrt{\frac{d^s}{n}} + 4\sqrt{\frac{\log(|\cC|) + \log(1/\delta)}{n}} \quad \text{ for all } c \in \cC
\end{align*}
Lastly, the per-round run-time of the algorithm is $\widetilde{\cO}(t \cdot d  s)$. Hence, $f$ runs in time $\cO(ds \cdot n^2)$.
\end{enumerate}

\end{corollary}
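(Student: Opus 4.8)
The plan is to obtain all three statements as instantiations of the general reduction \Cref{thm:reduction}, taking for the online subroutine $\cA$ the K29 / defensive forecasting algorithm of \cite{vovk2005defensive1,vovk2005defensive2}. The single fact I would take as a black box---and prove self-contained in \Cref{sec:K29}---is the classical RKHS guarantee for K29: if $k$ is a positive semidefinite kernel on $\cX\times[0,1]$ with RKHS $\cH_k$, and every $c\in\cC$ lies in $\cH_k$ and is continuous in the forecast coordinate $p$, then the transcript produced by K29 with kernel $k$ satisfies $\big|\sum_{t=1}^T c(x_t,p_t)(p_t-y_t)\big|\le\|c\|_{\cH_k}\sqrt{T\kappa}$ for every $c\in\cH_k$, where $\kappa:=\sup_{x,p}k\big((x,p),(x,p)\big)$. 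In the language of \Cref{def:online_oi} this says $\Regret_\cA(T)\le\big(\sup_{c\in\cC}\|c\|_{\cH_k}\big)\sqrt{T\kappa}$. The continuity-in-$p$ hypothesis is precisely what makes the per-round step of K29 (a fixed-point equation in $p_t$) solvable, via an intermediate-value argument; this is the one place the ``continuous in $p$'' assumption of part (a) is used. Granting this, each of (a)--(c) reduces to three routine ingredients: exhibit a kernel whose RKHS contains $\cC$ with a uniformly bounded norm, bound $\kappa$, and bound the cost of evaluating $k$; then \Cref{thm:reduction} immediately yields the stated batch inequality with $\Regret_\cA(n)/n\le(\sup_c\|c\|_{\cH_k})\sqrt{\kappa/n}$, and the runtime of $f$ is $n$ times the per-round cost of K29 (dominated by $\cO(t)$ kernel evaluations at round $t$).

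For part (a), take the finite-dimensional feature map $\Phi(x,p)=(c_1(x,p),\dots,c_{|\cC|}(x,p))$ and kernel $k=\langle\Phi(\cdot),\Phi(\cdot)\rangle$; then $c_j$ is represented by a standard basis vector so $\|c_j\|_{\cH_k}=1$, while $\kappa=\sup_{x,p}\sum_j c_j(x,p)^2\le|\cC|$ since $|c_j|\le1$, giving $\Regret_\cA(T)\le\sqrt{T|\cC|}$; one kernel evaluation costs $\widetilde\cO(|\cC|)$, so round $t$ costs $\widetilde\cO(t|\cC|)$ and $f$ costs $\cO(n^2|\cC|)$. For part (b), take $k\big((x,p),(x',p')\big)=\langle x,x'\rangle+pp'$; then $c_\theta(x,p)=\theta^\top x+p$ is represented by $(\theta,1)$, so $\|c_\theta\|_{\cH_k}^2=\|\theta\|_2^2+1\le2$ uniformly over $\cC_{\mathrm{lin}}$, and $\kappa=\sup\{\|x\|_2^2+p^2\}\le2$, which yields a \emph{dimension-free} regret of order $\sqrt{T}$ (the precise constant matching the stated $\sqrt{2T}$ after accounting for the rescaling of $\cC_{\mathrm{lin}}$ into $[-1,1]$ required by \Cref{thm:reduction}); each kernel evaluation is an inner product in $\R^d$, costing $\widetilde\cO(d)$, so the per-round cost is $\widetilde\cO(td)$ and $f$ costs $\cO(n^2d)$.

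Part (c) carries the real content, and I expect its two halves to be the main obstacle. First, the naive feature map (all monomials $\prod_{i\in S}x_i$, $|S|\le s$) controls $\kappa=\Theta(d^s)$ but \emph{not} the RKHS norm, since a degree-$s$ multilinear polynomial bounded in $[-1,1]$ on $\{0,1\}^d$ may have large monomial coefficients. The fix is to change basis: a degree-$s$ polynomial in the $x_i$ is also degree $s$ in the $\pm1$ variables $\chi_i(x):=1-2x_i$, so I would use the feature map $\Phi(x)=(\chi_S(x))_{|S|\le s}$ with $\chi_S:=\prod_{i\in S}\chi_i$, augmented by one coordinate equal to $p$ to also cover $c(x,p)=p$. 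By Parseval, $\|c\|_{\cH_k}^2=\sum_{|S|\le s}\widehat c(S)^2=\E_x[c(x)^2]\le1$ for $c\in\cC_{\mathrm{LowDeg}}$ (and $=1$ for $c(x,p)=p$), while $\kappa=\sup_{x,p}\big(\sum_{|S|\le s}\chi_S(x)^2+p^2\big)=\sum_{j=0}^s\binom dj+1=\cO(d^s)$, so $\Regret_\cA(T)=\cO(\sqrt{d^sT})$. Second, the runtime claim is $\widetilde\cO(t\,ds)$, not $\widetilde\cO(t\,d^s)$, which requires a closed-form evaluation of $k$: writing $z_i:=\chi_i(x)\chi_i(x')\in\{-1,1\}$ and $m:=\#\{i:z_i=-1\}$, one has $\sum_{|S|\le s}\prod_{i\in S}z_i=\sum_{j=0}^s e_j(z)=\sum_{j=0}^s[t^j]\,(1+t)^{d-m}(1-t)^m$, and the first $s+1$ coefficients of that product are computable in $\widetilde\cO(ds)$ time (after the $\cO(d)$-time Hamming-distance computation of $m$); hence each of the $\cO(t)$ kernel evaluations at round $t$ costs $\widetilde\cO(ds)$, the per-round cost is $\widetilde\cO(t\,ds)$, and $f$ runs in $\cO(n^2ds)$. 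In every case the displayed batch inequality is then exactly \Cref{thm:reduction} with these values of $\Regret_\cA$ substituted, and the claim that the online guarantee holds for the full (infinite) classes $\cC_{\mathrm{lin}}$ and $\cC_{\mathrm{LowDeg}}$ follows because the K29 bound is uniform over the RKHS norm ball---only the union bound in the online-to-batch step forces the restriction to a finite $\cC$.
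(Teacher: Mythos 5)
Your proposal is correct and follows the same overall template as the paper: invoke the RKHS guarantee for K29 (\Cref{prop:k29}), exhibit a kernel whose RKHS contains $\cC$ with uniformly bounded norm, bound the kernel's diagonal $\kappa$, and feed the resulting regret into \Cref{thm:reduction}. Parts (a) and (b) match the paper's argument essentially verbatim: part (a) uses the sum-of-products kernel $k=\sum_{c\in\cC}c\cdot c$ (equivalently, your feature map $\Phi=(c_1,\dots,c_{|\cC|})$) with $\|c\|\leq 1$ and $\kappa\leq|\cC|$, and part (b) uses the linear kernel $\langle x,x'\rangle+pp'$ with $\|c_\theta\|^2=\|\theta\|^2_2+1\leq 2$ and $\kappa\leq 2$. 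Where you diverge is part (c): the paper simply cites Corollary~3.3 of \cite{infinity} and its ANOVA kernel, delegating both the norm control and the $\widetilde\cO(ds)$ kernel evaluation to that reference. You instead give a self-contained construction, and in doing so you correctly identify the nontrivial point that the paper's citation elides: the naive monomial feature map fails because a bounded degree-$s$ multilinear polynomial can have monomial coefficients as large as $2^s$, blowing up the RKHS norm. Your fix---passing to the $\pm1$ parity basis $\chi_S$, where Parseval gives $\|c\|^2=\E_x[c(x)^2]\leq 1$ uniformly, with $\kappa=\sum_{j\leq s}\binom{d}{j}+1=\cO(d^s)$---and your closed-form kernel evaluation via the truncated generating function $(1+t)^{d-m}(1-t)^m$ (where $m$ is the Hamming distance) both check out and achieve the stated rate and runtime. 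The tradeoff is that the paper's citation is shorter and transparently reuses prior machinery, while your argument is self-contained and actually explains why the parity basis is needed. Minor nits that you share with the paper and need not worry about: the stated constants (e.g.\ $\sqrt{2T}$ in (b) and the factor $10$ in (c)) are not pinned down exactly by the cited norm and $\kappa$ bounds, and $\cC_{\mathrm{lin}}$ as written takes values in $[-1,2]$ rather than $[-1,1]$, requiring the rescaling you flag before \Cref{thm:reduction} literally applies.
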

\begin{proof}[Proof Sketch]
 The proofs of all these statements follow exactly the same simple template. We simply restate known regret bounds for the various algorithms that appear in the literature and then plug these into \Cref{thm:reduction}. While we omit the precise form of each kernel from the theorem statement, the interested reader can find them in the appendix.
\end{proof}

The corollary illustrates how one can efficiently find prediction rules $f$ that are performatively multicalibrated with respect to common classes of functions, for instance, linear functions or low-degree polynomials.
One can even efficiently guarantee indistinguishability with respect to any polynomially-sized collection $\cC$ (as long as the functions $c \in \cC$ are each efficiently computable). This, in particular, implies performative indistinguishability with respect to rich classes of functions, like deep neural networks. These hold without making any Lipschitzness assumptions on the distribution map $\cD(\cdot)$. Furthermore, the runtime of the procedures is, up to small polynomial factors, no different than that of algorithms achieving solving the analogous guarantee in supervised learning settings.

\section{Structural Results}

In the previous section, we illustrated how one can find predictors $f$ that actively shape the data $(x,y)\sim \cD(f)$ and simultaneously make predictions that are computationally indistinguishable from the true outcomes. Here, we take a step further and ask: are these computationally indistinguishable predictors useful (in a risk minimization sense)? What is the relationship between calibration and loss minimization in this outcome performative context?

To answer these, we analyze the relationship between the previous historical desiderata (\Cref{def:formal_indistinguishability}) and the core solutions in performative prediction. We review these below.

\begin{definition}[Performative Stability and Optimality] Let $\ell$ be a loss function an $\cH \subseteq \{ \cX \rightarrow [0,1]\}$ a  benchmark class. A predictor $\fps$ is performatively stable with respect to the class $\cH$ if,
\begin{align*}
    \E_{(x,p,y) \sim \cD(\fps)} \ell(p, y) \leq \min_{h \in \cH} \E_{(x,y) \sim \cD(\fps)} \ell(h(x),y).
\end{align*}
On the other hand, a predictor $\fpo$ is performatively optimal with respect the class $\cH$ if,
\begin{align*}
    \E_{(x,p,y) \sim \cD(\fpo)} \ell(p, y) \leq \min_{h \in \cH} \E_{(x,y) \sim \cD(h)} \ell(h(x),y).
\end{align*}
Lastly, we refer to $\E_{(x,y) \sim \cD(h)} \ell(h(x),y)$ as the performative risk of a predictor $h$.
\end{definition}

Intuitively, a predictor $f$ is performatively stable if its optimality cannot be refuted on the basis of the data that it induces. If we evaluate the loss of any alternative predictor $h$ over the distribution $\cD(f)$, we find it will have a higher loss than $f$ itself. However, this ignores the fact that different models induce distributions. Performatively optimality embraces this observation. A model $f$ is performatively optimal if it minimizes the performative risk.

As stated, our definitions are slight generalizations of those initially proposed by \citet{perdomo2020performative} since we allow $f$ to be randomized and do not require that it be a member of the class $\cH$. 
Furthermore, our definition holds for any possibly non-parametric class $\cH$. However, if $\cH = \cF_{\Theta}$ is a parametric class and we impose that $\fps \in \cF_\Theta$, we recover the previous definition, 
\begin{align*}
\tps \in \argmin_{\theta \in \Theta} \E_{(x,y) \sim \cD(\tps)} \ell(f_\theta(x), y) 
\iff
    \E_{(x,y) \sim \cD(\tps)} \ell(f_{\tps}(x), y) \leq \min_{\theta \in \Theta} \E_{(x,y) \sim \cD(\tps)} \ell(f_\theta(x),y).
\end{align*}

The following theorem is the main result of this section, showing how performatively multicalibrated predictors as per \Cref{def:formal_indistinguishability} are also performatively stable with respect to standard losses like squared error. To simplify our presentation, we make the further assumption that outcomes $y$ are binary.\footnote{Given recent advances \citep{gopalan2024omnipredictors,lu2025sample}, one can generalize this loss minimization guarantee to the case where $y$ is real-valued or to other loss functions. The main idea is to produce outcome indistinguishable predictions of a vector consisting of multiple statistics or moments of the outcome $y$. These extensions are interesting, but also somewhat involved and not essential to the main conceptual point of this section relating indistinguishability and stability. We omit them to keep our presentation short.}
\begin{theorem}
\label{thm:stability}
Assume that outcomes $y$ are binary. Let $\cH: \cX \rightarrow [0,1]$ be a benchmark class and let $\ell$ be the squared loss, $\ell(p,y)=\frac{1}{2}(p-y)^2$. If $f$ is $\eps$-performatively multicalibrated with respect to the functions $\cC$, defined as,
\begin{align*}
	\cC = \{c(x,p) = p -1/2\} \cup \{c(x,p) = h(x) - 1/2: h \in \cH\},
\end{align*}
then $f$ is $2\eps$-performatively stable with respect to $\cH$.
\end{theorem}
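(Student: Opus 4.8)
The plan is to reduce performative stability to a per-benchmark inequality and then exploit the standard algebraic identity for the squared loss. Fix an arbitrary $h \in \cH$. Both expectations appearing in the stability condition are taken under the \emph{same} joint law $\cD(f)$ of the triple $(x,p,y)$ --- on the left the drawn prediction $p \sim f(x)$ enters the loss $\ell(p,y)$, while on the right it is simply marginalized out --- so it suffices to prove
\[
\E_{(x,p,y)\sim\cD(f)}\tfrac12(p-y)^2 \;\le\; \E_{(x,p,y)\sim\cD(f)}\tfrac12\big(h(x)-y\big)^2 + 2\eps
\]
for every $h$, and then take the minimum over $\cH$.

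The first key step is the pointwise identity
\[
\tfrac12\big(h(x)-y\big)^2 - \tfrac12(p-y)^2 \;=\; \tfrac12\big(h(x)-p\big)^2 + \big(h(x)-p\big)(p-y),
\]
obtained by factoring $(h(x)-y)^2-(p-y)^2 = \big(h(x)-p\big)\big((h(x)-p) + 2(p-y)\big)$. Taking expectations over $\cD(f)$, the term $\E\,\tfrac12(h(x)-p)^2$ is nonnegative and may be dropped from a lower bound, so it remains only to show $\E_{\cD(f)}\big[(h(x)-p)(p-y)\big] \ge -2\eps$.

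The second step handles this cross term by unpacking $h(x)-p$ into exactly the two test functions that define $\cC$. Writing $h(x) - p = \big(h(x) - \tfrac12\big) - \big(p - \tfrac12\big)$ gives
\[
\E_{\cD(f)}\big[(h(x)-p)(p-y)\big] = -\,\E_{\cD(f)}\big[(h(x)-\tfrac12)(y-p)\big] + \E_{\cD(f)}\big[(p-\tfrac12)(y-p)\big].
\]
Since $c(x,p)=h(x)-\tfrac12$ and $c(x,p)=p-\tfrac12$ both lie in $\cC$ (and both take values in $[-\tfrac12,\tfrac12]\subseteq[-1,1]$), $\eps$-performative multicalibration of $f$ (\Cref{def:formal_indistinguishability}) bounds each of these two expectations in absolute value by $\eps$, hence the cross term is at least $-2\eps$. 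Substituting back into the identity and rearranging yields the displayed inequality for every $h$; taking the minimum over $\cH$ establishes $2\eps$-performative stability.

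I do not expect any genuine analytic obstacle here; the work is essentially bookkeeping, and the only point worth checking carefully --- and the reason the statement pins down this particular $\cC$ --- is that the cross term $\E_{\cD(f)}[(h(x)-p)(p-y)]$ decomposes into \emph{precisely} the two calibration constraints that are available, so that no Lipschitz or structural assumption on $\cD(\cdot)$ is ever invoked. Randomization of $f$ poses no difficulty, since the entire argument is carried out under the single measure $\cD(f)$ on $(x,p,y)$ (the mixture over the transcript functions $f_i$ in the batch construction having already been folded into $\cD(f_{\cA})$), and the choice of the $\tfrac12$ normalization in $\ell$ is what makes the final constant come out as $2\eps$ rather than $4\eps$.
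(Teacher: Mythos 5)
Your proof is correct. Every step checks out: the pointwise identity
\[
\tfrac12\big(h(x)-y\big)^2 - \tfrac12(p-y)^2 = \tfrac12\big(h(x)-p\big)^2 + \big(h(x)-p\big)(p-y)
\]
is valid, dropping the nonnegative square is sound for a lower bound, and the split $h(x)-p = (h(x)-\tfrac12) - (p-\tfrac12)$ hands the cross term to exactly the two test functions in $\cC$, each contributing at most $\eps$ in magnitude by \Cref{def:formal_indistinguishability}. Chaining and minimizing over $\cH$ gives the $2\eps$ bound, and you correctly identified that the $\tfrac12$ normalization is what keeps the constant at $2\eps$.

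However, your route is genuinely different from the paper's. The paper factors the argument through two lemmas: \Cref{lemma:loss_oi} introduces an intermediate ``self-confirming'' generative model $y \sim \Ber(p)$ for $p \sim f(x)$, compares the performative risk to the simulated risk, and invokes the defining property of \emph{proper scoring rules} to justify that $p$ is loss-optimal under $\Ber(p)$; \Cref{lemma:relating_calibration} then shows that, for squared loss specifically, the two loss-OI inequalities are identically equal to the performative multicalibration constraints with respect to $\cC$. Your proof bypasses the Bernoulli simulation and properness machinery entirely and instead exploits the explicit algebraic structure of the squared loss to factor the loss gap into a nonnegative quadratic plus a linear cross term. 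What this buys you is a shorter, more elementary, self-contained argument, and it makes transparent that the resulting guarantee is actually a \emph{regret-type} bound with an extra $\tfrac12\E(h-p)^2$ of slack that you discard. What the paper's decomposition buys is modularity and generality: \Cref{lemma:loss_oi} is stated for arbitrary proper scoring rules, so the same template immediately covers other losses once the corresponding loss-OI condition is established, whereas your factoring is tailored to squared loss. Both proofs are correct and arrive at the same constant.
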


The proof of this result follows from two lemmas. These make use of recent tools pioneered by \citet{gopalan2022loss} for the supervised learning setting, showing how multicalibrated (outcome indistinguishable) predictors are also loss-minimizing. 

\begin{lemma}
\label{lemma:loss_oi}
Assume that outcomes $y$ are binary and let $\ell:[0,1] \times \{0,1\} \rightarrow \R$ be a proper scoring rule. Fix a parameter $\eps>0$. If a function $f: \cX \rightarrow [0,1]$ satisfies the following inequalities for all $h \in \cH$, 
\begin{align}
\label{eq:loss_oi_2}
	\E_{(x,p,y) \sim \cD(f)} \ell(p, y) &\leq \E_{\substack{x \sim \cD_x, p \sim f(x) \\ y \sim \Ber(p)}} \ell(p, y) + \eps \\ 
	\E_{\substack{x \sim \cD_x, p \sim f(x) \\ y \sim \Ber(p)}} \ell(h(x), y) &\leq \E_{(x,y) \sim \cD(f)} \ell(h(x),y) + \eps  \notag
\end{align}
%
then, $f$ is $2\eps$-performatively stable with respect to $\cH$.
\end{lemma}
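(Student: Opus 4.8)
The plan is to establish the inequality defining $2\eps$-performative stability, namely
$\E_{(x,p,y)\sim\cD(f)}\ell(p,y) \leq \min_{h\in\cH}\E_{(x,y)\sim\cD(f)}\ell(h(x),y) + 2\eps$,
by chaining the two hypothesized inequalities around the single structural property of proper scoring rules. Recall that $\ell$ proper means that for every $q\in[0,1]$ the map $p\mapsto \E_{y\sim\Ber(q)}\ell(p,y)$ is minimized at $p=q$; equivalently, $\E_{y\sim\Ber(q)}\ell(q,y)\leq \E_{y\sim\Ber(q)}\ell(p,y)$ for all $p\in[0,1]$. This is the only fact about $\ell$ we will use.

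First I would fix an arbitrary $h\in\cH$ and write down a three-step chain starting from the performative loss $\E_{(x,p,y)\sim\cD(f)}\ell(p,y)$. \emph{Step one:} apply the first hypothesis, \Cref{eq:loss_oi_2}, to pass from the genuine performative law to the ``simulated'' law in which $p\sim f(x)$ and then $y\sim\Ber(p)$, paying an additive $\eps$. \emph{Step two:} for each fixed $x$ and each realized forecast $p$, the inner conditional law of $y$ under the simulated distribution is exactly $\Ber(p)$, so properness gives $\E_{y\sim\Ber(p)}\ell(p,y)\leq \E_{y\sim\Ber(p)}\ell(h(x),y)$ pointwise; taking expectation over $x\sim\cD_x$ and $p\sim f(x)$ preserves the inequality with no additional loss. \emph{Step three:} apply the second hypothesis to pass back from the simulated law to $\cD(f)$ when evaluating the benchmark $h$, paying another additive $\eps$, which lands at $\E_{(x,y)\sim\cD(f)}\ell(h(x),y)+2\eps$. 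Concatenating the three steps yields $\E_{(x,p,y)\sim\cD(f)}\ell(p,y)\leq \E_{(x,y)\sim\cD(f)}\ell(h(x),y)+2\eps$ for every $h\in\cH$, and taking the minimum over $h$ gives precisely $2\eps$-performative stability.

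There is no serious obstacle here: the only subtlety is the bookkeeping of the three distinct sampling processes — the genuine performative law $\cD(f)$, the two-stage ``believe your own forecast'' law $x\sim\cD_x,\, p\sim f(x),\, y\sim\Ber(p)$, and the benchmark being scored against $\Ber(p)$ rather than $\cD_y(x,p)$ — together with the observation that properness is exactly the ingredient that lets the forecast $p$ beat the alternative $h(x)$ on the simulated law. The two hypotheses \Cref{eq:loss_oi_2} are then precisely what is needed to transfer that simulated-law advantage into a statement about $\cD(f)$ itself, which is why they are the natural targets for the multicalibration-based argument used to prove \Cref{thm:stability}.
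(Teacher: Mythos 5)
Your proof is correct and follows essentially the same three-step chain as the paper: pass to the simulated law via the first hypothesis, invoke properness pointwise to beat each benchmark $h$, then pass back via the second hypothesis, and take the minimum over $h$. The only cosmetic difference is that you spell out the pointwise (conditional-on-$x,p$) use of properness before integrating, whereas the paper folds the $\min_{h\in\cH}$ directly into its middle inequality; both are equivalent.
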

\begin{proof}
The proof follows immediately from the indistinguishability conditions and the assumption that $\ell$ is a proper loss. In particular, from the first condition we have that:
\begin{align*}
	\E_{(x,p,y) \sim \cD(f)} \ell(p, y) \leq \E_{x \sim \cD_x, p \sim f(x), y \sim \Ber(p)} \ell(p, y) + \eps.
\end{align*}
Next, the definition of a proper loss is that if $y \sim \Ber(p)$, every other $h: \cX \rightarrow [0,1]$ must have at least as high a loss,
\begin{align*}
\E_{x \sim \cD_x, p \sim f(x), y \sim \Ber(p)} \ell(p, y)  \leq \min_{h \in \cH} \E_{x \sim \cD_x, p \sim f(x), y \sim \Ber(p)}  \ell(h(x), y).
\end{align*}
The result then follows from using the second assumption which guarantees that for any $h \in \cH$, 
\begin{align}
\label{eq:self_confirming}
\E_{x \sim \cD_x, p \sim f(x),y \sim \Ber(p)} \ell(h(x), y) \leq \E_{(x,y) \sim \cD(f)} \ell(h(x),y) + \eps.
\end{align}
Chaining all inequalities together, we get the $2\epsilon$ bound.
\end{proof}

The inequalities in the assumptions of the lemma are best understood as a particular kind of loss outcome indistinguishability conditions \citep{gopalan2022loss}. Recalling our discussion from \Cref{sec:preliminaries}, one can think of the losses $\ell(h(x),y)$ as a type of distinguisher $A_\ell(x, h(x))$. The lemma shows that if $f$ is a generative model of outcomes that passes a class of tests defined by  $\ell$ and  $\cH$, then $f$ is performatively stable. 

A similar condition had also been considered in the performative prediction literature by \citet{kim2022making}. The key conceptual difference is that \cite{kim2022making} consider learning a model $\widetilde{\cD}_y(x,p)$ that is indistinguishable from the true distribution map $\cD_y(x,p)$ from the perspective of a set of tests $A_{\ell, h}$ (that depend on the set of benchmark functions $h$ and loss function $\ell$). Informally, $\widetilde{\cD}_y(x, h(x)) \approx_{A_{\ell,h}} \cD_y(x,h(x))$ for all $(h,\ell)$ in some set. 

However, the indistinguishability conditions in \Cref{lemma:loss_oi} are with respect to a predictor $f: \cX \rightarrow \Delta([0,1])$, not a model of the distribution map $\cD_{y}: \cX \times [0,1] \rightarrow \Delta([0,1])$. They relate the performative risk to the expected risk where outcomes are self-confirming and sampled from the model proposed by $f$. Note that $p \sim f(x)$ and $y \sim \Ber(p)$ on the LHS of \Cref{eq:self_confirming}. Stated again informally, these conditions require that $\widetilde{\cD}_y(x, f(x)) \approx_{A_{\ell,h}} f(x)$.
Whereas the indistinguishability criteria from \cite{kim2022making} yield performative \emph{optimality}, ours yield performative \emph{stability}. Hence, the takeaway message from this discussion is that stability (not optimality) is the natural limit of deploying online calibration algorithms in performative contexts.

The next result finishes the proof of \Cref{thm:stability}. It establishes that the loss OI condition from \Cref{lemma:loss_oi} is equivalent to performative multicalibration if we fix $\ell$ to be the squared loss.

\begin{lemma}
\label{lemma:relating_calibration}
If $\ell$ is the squared loss $\ell(p, y) = \frac{1}{2}(y-p)^2$, then
\begin{align*}
	\big| \E_{(x,p,y) \sim \cD(f)} \ell(p, y) - \E_{\substack{x \sim \cD_x, p \sim f(x) \\ y \sim \Ber(p)}} \ell(p, y) \big| &\leq  \eps \\
		\big| \E_{\substack{x \sim \cD_x, p \sim f(x) \\ y \sim \Ber(p)}} \ell(h(x), y) - \E_{(x,y) \sim \cD(f)} \ell(h(x),y) \big| &\leq \eps 
\end{align*}
for all $h \in \cH$ if and only if $f$ is $\eps$-performatively multicalibrated with respect to the functions,
\begin{align*}
	\cC = \{c(x,p) = p -1/2\} \cup \{c(x,p) = h(x) - 1/2: h \in \cH\}.
\end{align*}
\end{lemma}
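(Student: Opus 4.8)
The plan is to show that the two displayed inequalities in the lemma are, term for term, \emph{exactly} the performative multicalibration constraints attached to the test functions $c(x,p) = p - 1/2$ and $c(x,p) = h(x) - 1/2$; once this algebraic identity is in hand, the ``if and only if'' is immediate, since no approximation is ever incurred and both directions follow from the same equalities.

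\textbf{Step 1: linearize the squared loss using $y^2 = y$.} Because $y$ is binary, for any fixed $x$ and any fixed forecast $p \in [0,1]$ one has $\tfrac12(y-p)^2 = \tfrac12 y(1-2p) + \tfrac12 p^2$ and likewise $\tfrac12(y - h(x))^2 = \tfrac12 y(1 - 2h(x)) + \tfrac12 h(x)^2$. Thus the squared loss is \emph{affine} in $y$, so its expectation depends on the law of $y$ only through the mean. Writing $\mu(x,p) = \E_{y \sim \cD_y(x,p)}[y]$ and recalling that $\E_{y \sim \Ber(p)}[y] = p$, a one-line computation gives, for every $(x,p)$,
\[
\E_{y \sim \cD_y(x,p)}\tfrac12(y-p)^2 - \E_{y \sim \Ber(p)}\tfrac12(y-p)^2 = \tfrac12(1-2p)\big(\mu(x,p) - p\big) = \big(p - \tfrac12\big)\big(p - \mu(x,p)\big),
\]
and, with $h(x)$ in place of $p$ inside the loss,
\[
\E_{y \sim \Ber(p)}\tfrac12(y - h(x))^2 - \E_{y \sim \cD_y(x,p)}\tfrac12(y - h(x))^2 = \tfrac12(1-2h(x))\big(p - \mu(x,p)\big) = -\big(h(x) - \tfrac12\big)\big(p - \mu(x,p)\big).
\]

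\textbf{Step 2: integrate out $x$ and the forecast.} Taking expectations over $x \sim \cD_x$ and $p \sim f(x)$ on both sides, and using that in $\cD(f)$ the outcome is drawn from $\cD_y(x,p)$ conditionally on the realized $(x,p)$ — so that by the tower property $\E[\,\cdot\,(p - \mu(x,p))] = \E_{(x,p,y)\sim\cD(f)}[\,\cdot\,(p - y)]$ — the first identity becomes $\E_{(x,p,y)\sim\cD(f)}\tfrac12(y-p)^2 - \E_{x,\,p\sim f(x),\,y\sim\Ber(p)}\tfrac12(y-p)^2 = \E_{(x,p,y)\sim\cD(f)}[(p-\tfrac12)(p-y)]$, and the second becomes, for each fixed $h \in \cH$, $\E_{x,\,p\sim f(x),\,y\sim\Ber(p)}\tfrac12(y-h(x))^2 - \E_{(x,y)\sim\cD(f)}\tfrac12(y-h(x))^2 = -\E_{(x,p,y)\sim\cD(f)}[(h(x)-\tfrac12)(p-y)]$.

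\textbf{Step 3: read off the equivalence.} Taking absolute values, the first inequality in the lemma is literally $\big|\E_{\cD(f)}[(p-\tfrac12)(p-y)]\big| \le \eps$, and the second (for a given $h$) is literally $\big|\E_{\cD(f)}[(h(x)-\tfrac12)(p-y)]\big| \le \eps$. Since $\cC = \{c(x,p) = p - 1/2\} \cup \{c(x,p) = h(x) - 1/2 : h \in \cH\}$ is precisely this collection of test functions, the conjunction of the first inequality together with the second for \emph{every} $h \in \cH$ is, by \Cref{def:formal_indistinguishability}, exactly the statement that $f$ is $\eps$-performatively multicalibrated with respect to $\cC$. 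As every step above is an exact equality, both directions of the ``if and only if'' hold simultaneously. I do not anticipate a real obstacle: the substance is the single observation that binariness ($y^2 = y$) collapses the squared loss to an affine functional of the conditional mean, which is what makes the equivalence exact rather than merely approximate; the only things needing care are the bookkeeping of the nested expectations (conditioning on the realized $(x,p)$ before invoking the tower property) and the signs, and one should read $\ell(p,y)$ in the statement as $\tfrac12(y-p)^2$.
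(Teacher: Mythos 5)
Your proof is correct and follows essentially the same route as the paper's: both exploit binariness to write the squared loss as an affine function of $y$, express each difference of expectations as $(\ell(\cdot,1)-\ell(\cdot,0))$ times the gap between $\E_{y\sim\cD_y(x,p)}[y]$ and $p$, and then recognize the result as the multicalibration constraints for $c(x,p)=p-\tfrac12$ and $c(x,p)=h(x)-\tfrac12$. As a side note, you (correctly) read the statement's $\ell(p,y)=\tfrac12(1-p)^2$ as the typo for $\tfrac12(y-p)^2$, and your sign bookkeeping is in fact a bit more careful than the paper's second displayed equation, which has an inconsequential sign flip that disappears once absolute values are taken.
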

\Cref{thm:stability} follows directly from the last two lemmas. Moreover, it also follows from \Cref{corr:online_examples} that one can statistically and computationally efficiently find predictors $f$ that are performatively stable for important classes of benchmark classes $\cH$ by performing an online-to-batch conversion. 

These results strengthen our previous understanding of performative stability. In the initial work on performative prediction, stable predictors were only known to be efficiently computable under strong regularity conditions on the distribution map $\cD(\cdot)$ \citep{perdomo2020performative, mendler2020stochastic, drusvyatskiy2023stochastic} and the loss function $\ell$. Specifically, $\cD_y(x,p)$ is assumed to be sufficiently Lipschitz in the forecast $p$ and $\ell$ is pointwise strongly convex and smooth. 
This restriction on $\cD(\cdot)$ rules out common settings where decision makers choose actions that influence outcomes if the forecast $p$ is above or below some threshold. In education, for instance, counselors at a school often assign extra attention to students by examining whether their predicted probabilities are below some fixed threshold \citep{perdomo2023difficult}. This thresholding implies that the distribution map is not Lipschitz. 

To the best of our knowledge, all subsequent work in this area also imposed some continuity restriction on $\cD(\cdot)$ to prove that algorithms converged to stable points (e.g. \citep{khorsandi2024tight, taori2023data}).\footnote{An exception to this is the recent paper by \cite{wang2025multi}. They show convergence to performative stability for a specific multi-agent performative prediction problem.} Except for the fact that $y$ is binary, we establish convergence to stability under essentially the weakest possible conditions for the outcome performative setting since $\cD_y(x,p)$ is unrestricted.  

\section{Suboptimality of Perfectly Calibrated Performative Predictions}
\label{sec:example}

We end our work by showing how validity desiderata developed in supervised learning contexts (like calibration or indistinguishability) fall short in performative settings. For a predictor to be good, it's not enough for it to forecast outcomes accurately; it needs to actively \emph{steer} the data and make use of the fact that $f$ shapes $\cD(f)$. 
In more detail, we establish the following result. 

\begin{theorem}
\label{thm:maximize}
There exists a distribution map $\cD(\cdot)$, such that for any $\eps> 0$:
\begin{enumerate}
	\item There exists a randomized $f$ such that for all bounded functions $c(x,p)$ that are continuous in $p$,
\begin{align*}
	\big| \E_{(x,p,y) \sim \cD(f)}[c(x,p)(y- p)] \big| \leq \eps.
\end{align*}
\item Furthermore, for any function, $h: \cX \rightarrow [0,1]$, $$
	\E_{(x,p,y)\sim \cD(f)} (p -y)^2 \geq \E_{(x,y) \sim \cD(h)} (y - h(x))^2 - \cO(\eps).$$ Hence, 
	\begin{align*}
		\E_{(x,p,y)\sim \cD(f)} (p -y)^2 \geq \max_{h \in \cH}\E_{(x,y) \sim \cD(h)} (y - h(x))^2 - \cO(\eps),
	\end{align*}
	for any class of functions $\cH \subseteq \{\cX \rightarrow [0,1]\}$.
\end{enumerate}

\end{theorem}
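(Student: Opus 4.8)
The plan is to construct an explicit, essentially one-dimensional distribution map in which predictions are self-negating in a controlled way, so that the unique performatively multicalibrated predictor is forced to a fixed point that happens to have maximal variance. Concretely, I would take $\cX$ to be a singleton (or let $\cD_x$ be arbitrary and have $\cD_y$ ignore $x$), so the problem reduces to choosing a distribution $f$ over $[0,1]$. I would design $\cD_y(p)$ so that $\Ber$-type outcomes are ``pushed away'' from the prediction: for example, let the conditional mean $\mu(p) \defeq \E_{y \sim \cD_y(p)}[y]$ be a continuous, decreasing function with $\mu(0)$ close to $1$ and $\mu(1)$ close to $0$, crossing the diagonal only at $p = 1/2$, and arrange that at the crossing point the conditional distribution $\cD_y(1/2)$ is (close to) the Bernoulli$(1/2)$ — the maximum-variance distribution on $\{0,1\}$. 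The intuition is that any performatively multicalibrated $f$ must be supported (up to $\eps$) where $p = \mu(p)$, i.e.\ concentrated near $p=1/2$, which is exactly the worst case for squared error.

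For part 1, I would argue that the point mass (or near-point-mass) $f = \delta_{1/2}$ satisfies the multicalibration condition for \emph{all} bounded $c(x,p)$ continuous in $p$: since $p \equiv 1/2$ and $\E[y] = \mu(1/2) = 1/2$, we get $\E[c(x,p)(y-p)] = \E[c(x,1/2)]\cdot(\mu(1/2) - 1/2) = 0$ exactly when the crossing is exact; to get the $\eps$ slack stated in the theorem I would instead place the crossing at $1/2$ exactly but only require $\cD_y(1/2)$ to be within $\eps$ of Bernoulli$(1/2)$, or smear $f$ over a tiny interval around $1/2$ and use continuity of $c$ in $p$ together with $|c|\le 1$ to bound the error by $\eps$. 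Either way this step is routine: a single continuous function is involved, no union bound, no sampling.

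For part 2, the left-hand side is $\E_{\cD(f)}(p-y)^2$, which for $f$ concentrated at $1/2$ and $\cD_y(1/2) \approx \Ber(1/2)$ equals $(1/2 - y)^2 = 1/4$ in expectation (up to $\cO(\eps)$), the maximum possible value of a squared error with $y \in \{0,1\}$ and $p \in [0,1]$. The right-hand side, $\E_{(x,y)\sim\cD(h)}(y - h(x))^2$ for any fixed $h$, is at most $1/4$ because $\cD_y$ is supported on $\{0,1\}$ (or $[0,1]$) and $h(x) \in [0,1]$, so each pointwise squared error is at most $1/4$ — hence $\E_{\cD(f)}(p-y)^2 = 1/4 - \cO(\eps) \ge \E_{\cD(h)}(y-h(x))^2 - \cO(\eps)$; taking the max over $h \in \cH$ is immediate since every term on the right is $\le 1/4$. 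To make the gap genuinely non-trivial (as promised in the footnote, $\min_h \E_{\cD(h)}(y-h(x))^2 \ll 1/4$), I would further specify $\mu$ so that, say, $h \equiv 1$ induces $\cD_y(1)$ with $\mu(1)$ near $0$ and small variance, giving that benchmark a performative risk near $0$; this part is an optional embellishment and not needed for the two displayed inequalities.

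The main obstacle I anticipate is not any inequality but the \emph{uniqueness/forcing} argument underlying part 1 — namely making sure that the constructed $f$ really does satisfy multicalibration against \emph{all} continuous bounded $c$, and that I haven't accidentally allowed some other, better-calibrated predictor that escapes the construction. The safe route is to not argue uniqueness at all: I only need to \emph{exhibit one} $f$ (the near-point-mass at the diagonal crossing) that is simultaneously $\eps$-multicalibrated and variance-maximal, which sidesteps any fixed-point or uniqueness subtlety. The one technical point requiring a little care is reconciling the \emph{exact} self-negation needed for multicalibration against \emph{arbitrary} $c$ with the $\eps$ slack: I would handle this by choosing $\cD_y(1/2)$ to have mean \emph{exactly} $1/2$ (so the multicalibration error is exactly $0$ for every $c$, even better than required) while its variance is within $\cO(\eps)$ of $1/4$ — e.g.\ $\cD_y(1/2)$ puts mass $1/2\pm\delta$ on the two points, which has mean $1/2$ and variance $1/4 - \delta^2$ — so that both claims hold with the stated $\eps$, $\cO(\eps)$ dependence.
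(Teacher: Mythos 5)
Your proof of part 1 is fine and in fact cleaner than the paper's: by using a continuous $\mu$ with an exact crossing at $p = 1/2$ you can take $f = \delta_{1/2}$ and get multicalibration error exactly $0$ for every bounded continuous $c$, with no $\alpha \to 0$ limit argument. (The paper deliberately uses a discontinuous map $g(p) = p \pm 0.01$ with no deterministic fixed point, so that the construction also illustrates why randomization is necessary, and then mixes two nearby atoms and sends the gap to zero — but for the theorem statement alone your route is a valid simplification.)

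However, part 2 has a genuine error, and it propagates from your choice of $\mu$. You claim that ``each pointwise squared error is at most $1/4$,'' but this is false for binary $y$ and $h(x)\in[0,1]$: if $h(x) = 0$ and $y = 1$ then $(y - h(x))^2 = 1$. More concretely, with your $\mu$ decreasing from $\mu(0)\approx 1$ to $\mu(1)\approx 0$, the constant benchmark $h\equiv 0$ induces $y\sim\Ber(\mu(0))$ and
\begin{align*}
\E_{(x,y)\sim\cD(h)}(y - h(x))^2 = \E[y^2] = \mu(0) \approx 1 \gg 1/4,
\end{align*}
so your $f$ (with risk $\approx 1/4$) does not maximize the performative risk; the inequality in part 2 fails. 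Symmetrically, $h\equiv 1$ also has risk near $1$, not ``near $0$'' as you assert in the ``optional embellishment.'' The problem is that your strong self-negation ($\mu$ spanning nearly $[0,1]$) makes the extreme benchmarks $h\equiv 0$ and $h\equiv 1$ the \emph{worst} predictors, not the best, exactly inverting what the theorem needs. The paper avoids this by taking $g(p)$ within $0.01$ of $p$ everywhere: then for every $h$, $\E_{\cD(h)}(y-h(x))^2 = g(p)(1-g(p)) + (g(p)-p)^2 \le 1/4 + 0.0001$ with $p = h(x)$, so the max over $h$ is just under $1/4 + O(\eps)$ and is attained near $p = 1/2$, where $f$ lives. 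Your construction therefore needs $\mu$ to stay within $O(\eps)$ of the identity, not span from near $1$ to near $0$; with that change your argument for part 2 would go through, and the footnoted gap between $\min_h$ and $\max_h$ performative risk would also appear (since at $p\in\{0,1\}$ the risk drops to $O(\eps)$). A minor additional slip: for binary $y$ the conditional law at $p=1/2$ is $\Ber(q)$, so its mean and variance are tied as $q$ and $q(1-q)$; you cannot independently set mean $=1/2$ while leaving the variance $\eps$ below $1/4$, and in any case you don't need to.
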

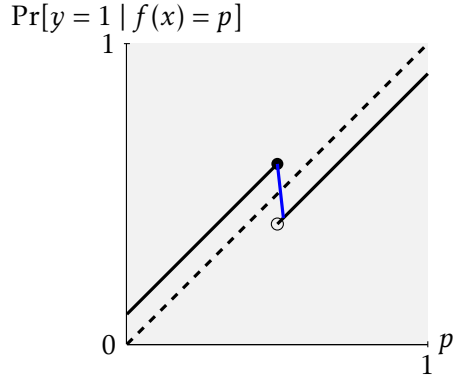
\begin{figure}[t!]
\begin{center}
\begin{tikzpicture}[scale=4]
    \def\ystar{0.5}
    
    \draw[thick] (0,0) -- (1.00,0) node[right] {$p$};
    \draw[thick] (0,0) -- (0,1.00) node[above] {$\Pr[y=1\mid f(x) =p]$};

    \fill[gray!10] (0,0) rectangle (1,1);
    
    \draw[very thick,dashed] (0,0) -- (1,1);
    
    \draw[very thick] (0,0.1) -- (0.5,0.6);
    
    \draw[very thick] (0.5,0.4) -- (1,0.9);
    
    \fill (0.5,0.6) circle [radius=0.02];  
    \draw (0.5,0.4) circle [radius=0.02];  
    
    
    
    
    \draw[blue, very thick, dotted] (0.5,0.6) -- (0.52,0.42);

    
    \draw (1,-0.01) -- (1,0.01);
    \node[below] at (1,0) {$1$};
    \draw (-0.01,1) -- (0.01,1);
    \node[left] at (0,1) {$1$};
    \node[left] at (0,0) {$0$};

\end{tikzpicture}
\end{center}
    \caption{A visualization of the distribution map for the construction in \Cref{thm:maximize}. The solid black lines describe the probability that $y=1$ given that the prediction is equal to $p$. The blue dotted line indicates the probability that $y=1$ if one deploys the randomized predictor $f_r$ that mixes between $1/2$ with probability $\lambda$ and $1/2 + \eps$ with probability $1-\lambda$ for the whole range of $\lambda \in [0,1]$. Fixed points, $\E_{\cD(f)}[y] = \E_{\cD(f)}[f(x)]$ are those that cross the dotted diagonal line.
    Because $\Pr[y=1\mid f(x) =p]$ is discontinuous in $p$, there is no deterministic prediction $p$ that is calibrated, i.e $p = \Pr[y=1\mid f(x) =p]$. The only calibrated predictors are those that randomize between $1/2$ and $1/2 + \epsilon$. While calibrated, these maximize squared error since they induce $y$ to be a fair coin toss.
    }
   \label{fig:discontinuous_d}
\end{figure}
\begin{proof}
Consider the case where there are no features, and the outcome $y$ is binary. Since there are no features, we define the distribution map to be $\cD_y(x,p) = \Pr[y=1\mid p] = g(p)$ where $g(p) = p + .01$ if $p \leq .5$ and $p - .01$ if $p > .5$. 
 We depict the distribution map visually in \Cref{fig:discontinuous_d}. 

Note that there is no determistic predictor $f_p=p$ that satisfies $\E_{\cD(f_p)}[y]= \E_{\cD(f_p)}[f_p]$ since $\E_{\cD(f_p)}[y] = g(p)$ and there does not exist $p \in [0,1]$ such that $g(p)=p$. However, the predictor $f_r$ that uniformly mixes between $p_1 = 1/2$ and $p_2= 1/2 + \alpha$ for some small $\alpha$ satisfies $\E_{y \sim \cD(f_r)}[y]= \E_{p \sim f_r}[p]$ for any $\alpha < .5$. Hence, if we don't impose continuity assumptions on $\cD$, we need predictors to be randomized in order to find solutions such that $\E_{\cD(f)}[y - f(x)]=0$.

 Moreover, given any continuous function $c:[0,1] \rightarrow[-1,1]$, this predictor $f_r$ satisfies
\begin{align*}
	\E_{y \sim \cD(f_r), p \sim f_r}[c(p)(y-p)] = \frac{.01}{2}\big( c(1/2) - c(1/2 +\alpha) \big) \end{align*}  
 By letting $\alpha$ go to 0, $\E_{(x,y) \sim \cD(f_r)}[c(p)(y-p)]  \rightarrow 0$ since $\lim_{\alpha \rightarrow 0}c(1/2 + \alpha)  = c(1/2)$. At the same time, this predictor $f_r$ which uniformly mixes between $1/2$ and $1/2 + \eps$ satisfies \begin{align*}
 	\E_{y \sim \cD(f),p\sim f} (y - p)^2 = \frac{1}{4} + \cO(\alpha) 
 \end{align*}
 since $y$ is either 1 or 0 and $p$ is always nearly 1/2. 
 The performative risk for this problem is a quadratic function in $p$ that is maximized at $p=1/2$.
Yet the performatively optimal solutions predict either 0 or 1. These achieve the best possible performative risk of $.01$.  
 \end{proof}

This construction complements insights from a previous result by \citet{miller2021outside}, who also showed how performatively stable models can maximize the performative risk. Our result provides a different perspective by connecting these notions of loss minimization (stability and optimality) with notions of computational indistinguishability. Lastly, we presented this construction where there are no features for the sake of simplicity. However, one could extend it to include features by redoing a similar conditional distribution pointwise for every $x \in \cX$. 

\section{Discussion and Future Work}

Prediction algorithms are now commonplace in important social domains, including healthcare, education, and public administration. Performativity abounds. By shaping our decision-making processes, these prediction algorithms influence the outcomes we see. Drawing on a long intellectual history, our work revisits core methodological questions regarding the design and evaluation of predictors in these domains. 

On the design side, we introduced new algorithms that can overcome social feedback and produce rigorously calibrated predictions of social events. These results resolve algorithmic questions left open by \cite{simon1954bandwagon} and \cite{grunberg1954predictability}. On the evaluation side, our contributions are mostly conceptual. We illustrate how traditional desiderata imported from supervised learning--- calibration and its different variants --- mean something quite different in these performative contexts where predictions shape outcomes. These insights call into question claims regarding the utility and significance of public forecasts of social events that stake their validity on calibration \citep{silver2019seventy}. 

There are a number of interesting directions for future work in this area. For one, we   prove our results for the simplest case of performative prediction: (stateless) outcome performativity. It would be interesting to consider whether these results carry over to richer, stateful domains where the outcomes we see don't just depend on the predictions we make today, but also on the predictions we've made in the past \citep{brown2022performative}. It would also be valuable to fully understand whethere these results carry over to the setting where both feature and outcomes are performative.

\section*{Acknowledgements}
We would like to thank Aaron Roth, Tijana Zrnic, and the anonymous ICML reviewers for helpful comments and discussion.

\bibliography{refs.bib}
\appendix

\section{The K29 Algorithm}
\label{sec:K29}

\begin{figure}[h!]
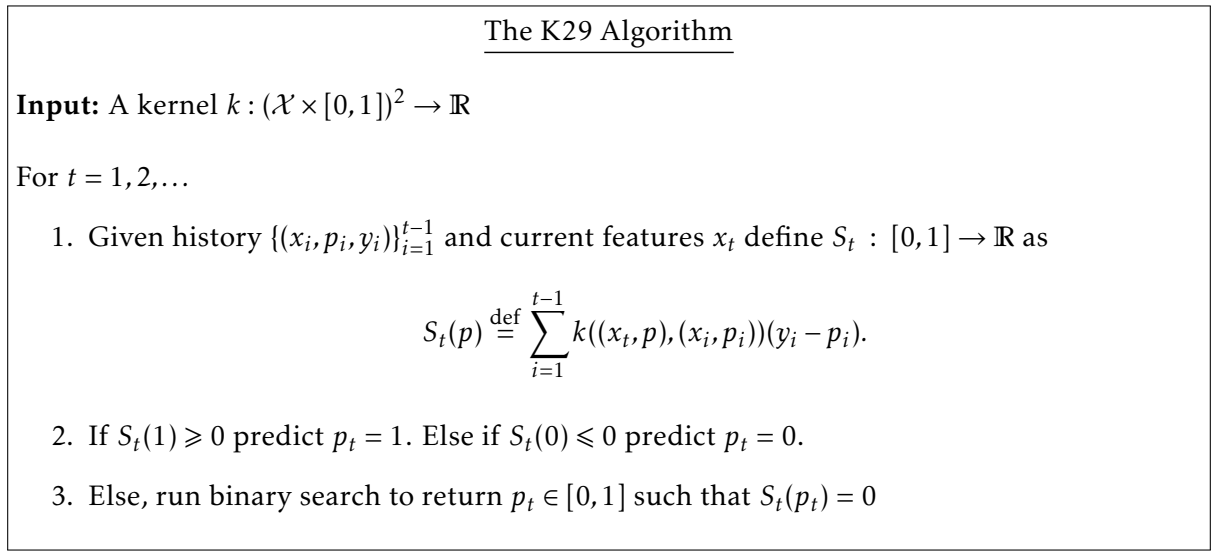

\begin{boxedminipage}{\textwidth}
\begin{center}
\vspace{2pt}
{\centering{\underline{The K29 Algorithm}}}
\end{center}
\textbf{Input:} A kernel $k: (\cX \times [0,1])^2 \rightarrow \R$ \\

For $t=1, 2, \dots$
\begin{enumerate}
    \item Given history $\{(x_i, p_i, y_i)\}_{i=1}^{t-1}$ and current features $x_t$ define $S_t \; : \; [0, 1] \to \R$ as $$S_t(p) \defeq \sum_{i=1}^{t-1} k((x_t, p), (x_i, p_i))(y_i -p_i).$$
    \item If $S_t(1) \geq 0$ predict $p_t=1$. Else if $S_t(0) \leq 0$ predict $p_t=0$. 
    \item Else, run binary search to return $p_t \in [0,1]$ such that $S_t(p_t) = 0$ 
\end{enumerate}
\vspace{2pt}
\end{boxedminipage}
\caption{At every round, the algorithm implicitly publishes a function $f_t: \cX \rightarrow [0,1]$ where predictions $p_t=f(x_t)$ are chosen by solving a simple optimization problem defined with respect to the history $\{(x_i, p_i, y_i)\}_{i=1}^{t-1}$. For simplicity, we state the algorithm with exact root finding. The main regret bound is still true, however, if one finds an approximate root $|S_t(p)| \leq 1 / \poly(t)$. It degrades by an additive constant. The per round run time in $\widetilde{\cO}(t \cdot \time(k))$ where $\time(k)$ is an upper bound on kernel evaluation.}
\label{fig:online_protocol_randomized}
\end{figure}

For the sake of completeness, we provide a short overview of the K29 algorithm from \cite{vovk2005defensive2,vovk2005defensive1}. It is a simple, kernel-based procedure that is hyperparameter free. It guarantees online multicalibration with respect to functions $f$ in a Reproducing Kernel Hilbert Space $\cF$. We summarize its main guarantee below:
\begin{proposition}[\cite{vovk2005defensive2}]
\label{prop:k29}
Let $k((x,p),(x',p'))$ be a kernel that is continuous in $p$ and let $\cF$ be its associated RKHS $\cF$ with norm $\| \cdot\|_{\cF}$. With probability 1, the predictions $p_t$ produced by the  K29 algorithm in the online protocol (\Cref{def:online_protocol}) result in a transcript $\{(x_t, p_t, y_t)\}_{t=1}^T$ satisfying, 
\begin{align*}
	\big| \sum_{t=1}^T f(x_t,p_t) (y_t-p_t)  \big| \leq \| f\|_{\cF} \sqrt{\sum_{t=1}^T k((x_t,p_t)(x_t,p_t))(y_t-p_t)^2}  
\end{align*} 
\end{proposition}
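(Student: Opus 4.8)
The statement to prove is \Cref{prop:k29}, the regret bound for the K29 algorithm. Let me sketch a proof plan.

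\textbf{Overall approach.} The plan is to use the classical potential/defensive-forecasting argument for K29. The key object is the running sum of kernel evaluations against residuals, and the strategy is to show that the choice of $p_t$ (a zero of $S_t$, or a boundary value) keeps a certain quadratic potential from growing too fast. Concretely, I would track the quantity $\Phi_T = \big\| \sum_{t=1}^T \Psi(x_t,p_t)(y_t-p_t) \big\|_{\cF}^2$, where $\Psi(x,p)$ is the canonical feature map of the RKHS $\cF$, so that $\langle \Psi(x,p), \Psi(x',p')\rangle_{\cF} = k((x,p),(x',p'))$ and $f(x,p) = \langle f, \Psi(x,p)\rangle_{\cF}$ for $f \in \cF$. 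The endpoint of the argument is: expand $\Phi_T$ telescopically, bound $\Phi_T$ by $\sum_{t=1}^T k((x_t,p_t),(x_t,p_t))(y_t-p_t)^2$, and then apply Cauchy–Schwarz in $\cF$.

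\textbf{Key steps in order.} First I would set up the feature-map notation and write $g_T \defeq \sum_{t=1}^T \Psi(x_t,p_t)(y_t-p_t) \in \cF$, noting $S_t(p) = \langle g_{t-1}, \Psi(x_t,p)\rangle_{\cF}$ as a function of $p$. Second, I would expand
\[
\|g_t\|_{\cF}^2 = \|g_{t-1}\|_{\cF}^2 + 2(y_t-p_t)\langle g_{t-1}, \Psi(x_t,p_t)\rangle_{\cF} + (y_t-p_t)^2 k((x_t,p_t),(x_t,p_t)),
\]
and observe that the cross term is exactly $2(y_t-p_t) S_t(p_t)$. Third — this is the crux — I would argue that the K29 choice of $p_t$ forces $(y_t-p_t)S_t(p_t) \le 0$: if $S_t$ has a root in $[0,1]$ we pick it and the term vanishes; if $S_t(1) \ge 0$ we set $p_t=1$, so $y_t - p_t = y_t - 1 \le 0$ while $S_t(1) \ge 0$ (using continuity of $k$ in $p$ to know $S_t$ is continuous, hence if it has no interior root and $S_t(1)\ge0$ then... actually one needs $S_t(0)\le 0$ is the complementary case); symmetrically if $S_t(0) \le 0$ we set $p_t = 0$ so $y_t - p_t = y_t \ge 0$ and $S_t(0) \le 0$. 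In every case the cross term is $\le 0$. Fourth, telescoping gives $\|g_T\|_{\cF}^2 \le \sum_{t=1}^T (y_t-p_t)^2 k((x_t,p_t),(x_t,p_t))$. Fifth, for any $f \in \cF$, $\sum_{t=1}^T f(x_t,p_t)(y_t-p_t) = \langle f, g_T\rangle_{\cF}$, so by Cauchy–Schwarz its absolute value is at most $\|f\|_{\cF}\|g_T\|_{\cF} \le \|f\|_{\cF}\sqrt{\sum_{t=1}^T k((x_t,p_t),(x_t,p_t))(y_t-p_t)^2}$, which is the claimed bound.

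\textbf{Main obstacle.} The delicate point is Step 3: verifying that in the three branches of the algorithm the cross term $(y_t - p_t)S_t(p_t)$ is nonpositive, and in particular checking the boundary cases are mutually exhaustive and consistent. One must use that $S_t$ is continuous in $p$ (guaranteed by continuity of the kernel in $p$) so that, by the intermediate value theorem, whenever $S_t(1) < 0$ and $S_t(0) > 0$ a root exists in $(0,1)$ — so the ``else'' branch is well-defined — while if $S_t(1) \ge 0$ or $S_t(0) \le 0$ the corresponding endpoint makes the product nonpositive because $0 \le y_t \le 1$. A secondary, purely bookkeeping obstacle is the ``with probability 1'' phrasing: since the algorithm as stated is deterministic given the transcript, this is really a deterministic (sure) bound, and the probabilistic wording is only there to accommodate the approximate-root variant mentioned in the figure caption; I would simply prove the deterministic inequality and remark that it holds surely, hence with probability $1$, for any realization of the online interaction.
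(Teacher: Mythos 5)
Your proposal is correct and follows essentially the same defensive-forecasting argument as the paper: write the cumulative calibration error as an inner product $\langle f, g_T\rangle_{\cF}$, apply Cauchy–Schwarz, and control $\|g_T\|_{\cF}^2$ by a telescoping expansion whose cross terms $(y_t - p_t)S_t(p_t)$ are forced nonpositive by the K29 choice of $p_t$ (with continuity of the kernel in $p$ ensuring the interior-root branch is well-defined). The only cosmetic difference is ordering — you telescope before invoking Cauchy–Schwarz, while the paper invokes Cauchy–Schwarz first — but the two decompositions are the same.
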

\begin{proof}
The proof uses basic facts about RKHS. By the reproducing property, function evaluation in the RKHS can be written as an inner product, $f(x,p) = \langle f, \Phi(x,p)\rangle_{\cF}
$, where $\Phi: \cX \times [0,1] \rightarrow \cF$ is the feature map for the RKHS and $f$ is an element in $\cF$. Using linearity of inner products and Cauchy-Schwarz,
\begin{align}
	\big| \sum_{t=1}^T f(x_t,p_t) (y_t-p_t)  \big| &= \big| \sum_{t=1}^T \langle f, \Phi(x_t, p_t)\rangle_{\cF} (y_t-p_t)  \big|\notag \\ 
	& = \big| \langle f, \sum_{t=1}^T \Phi(x_t, p_t)(y_t-p_t)  \rangle_{\cF}  \big| \leq \|f\|_{\cF} \|\sum_{t=1}^T  \Phi(x_t, p_t)(y_t-p_t)\|_{\cF}. \label{eq:init_decomp}
\end{align}
We now focus on bounding $\|\sum_{t=1}^T  \Phi(x_t, p_t)(y_t-p_t)\|_{\cF}$. By construction, the K29 algorithm   always chooses a prediction $p_t$ such that:
\begin{align}
\label{eq:k29_invariant}
	\sup_{y \in [0,1]} \langle (y - p_t) \Phi(x_t,p_t), \sum_{i=1}^{t-1} \Phi(x_i,p_i) (y_i-p_i) \rangle_{\cF}  =  \sup_{y \in [0,1]}  (y - p_t) S_t(p_t) \leq 0
\end{align}
Here, we used the fact that kernels represent inner products, $k((x,p), (x',p')) = \langle \Phi(x,p), \Phi(x',p') \rangle$ to do the rewriting. To see why \Cref{eq:k29_invariant} holds, note that if $S_t(1) \geq 0$ then $(y- 1)S_t(1)\leq 0$ for all $y \in [0,1]$. An analogous fact holds for the case where $S_t(0) \leq 0$. If neither of these is true, then $S_t(1)$ and $S_t(0)$ have opposite signs, and binary search returns a $p_t$ such that $S_t(p_t)=0$ (which exists by continuity of the kernel).
Next, we show that,
\begin{align}
\label{eq:squared_sum}
	\|\sum_{t=1}^T  \Phi(x_t, p_t)(y_t-p_t)\|_{\cF}^2 \leq  \sum_{t=1} \|\Phi(x_t, p_t)(y_t-p_t)\|_{\cF}^2 = k((x_t,p_t), (x_t,p_t))(y_t-p_t)^2 
\end{align}
This follows by induction on the partial sums $V_t = \|\sum_{i=1}^{t} v_i \|_{\cF}^2$, for $v_i = \Phi(x_i, p_i)(y_i - p_i)$,
\begin{align*}
	V_{t+1} = \| \sum_{i=1}^{t} v_i + v_{t+1} \|_{\cF}^2 =   \| \sum_{i=1}^{t} v_i\|_{\cF}^2  + \|v_{t+1} \|_{\cF}^2 + 2 \langle \sum_{i=1}^{t} v_i, v_{t+1} \rangle_{\cF} \leq V_t + \|v_{t+1} \|_{\cF}^2.
\end{align*}
Here, we used \Cref{eq:k29_invariant} to upper bound the cross term above. The result follows by taking square roots on either side of \Cref{eq:squared_sum} and plugging the upper bound into \Cref{eq:init_decomp}.
\end{proof}

To apply the algorithm to specific function classes $\cH$, the only remaining todo is to show how to construct kernels $k$ with corresponding RKHS $\cF$ such that $\cH \subseteq \cF$ and to show that both kernel evaluations $k((x,p),(x,p))$ and function norms $\|h\|_{\cF} $ are uniformly bounded. This yields the $\cO(\sqrt{T})$ regret bound. See \cite{infinity} for examples on how to do this for common function classes and \cite{perdomo2025defense} for a introduction to these techniques.

\section{Deferred Proofs}
\subsection{Proof of \Cref{thm:reduction}}
\begin{proof}
The proof follows the typical template of online to batch conversions, with the modification that we now draw samples from $\cD(\cdot)$ rather than a static distribution $\cD$. 

Let $\{x_i, y_i, f_i, p_i\}_{i=1}^n$ be the sequence of random variables generated in the $n$-round interaction (online to batch conversion) where $f_i$ is chosen by the online algorithm $\cA$ as a function of $\{x_s,y_s,f_s, p_s\}_{s=1}^{i-1}$, and the data at round $i$ is generated from sampling process $x_i \sim \cD_x$, $p_i =f_i(x_i)$, $y_i \sim \cD_y(x_i,p_i)$. By definition of $f_{\cA}$ and the outcome performavity assumption on $\cD(\cdot)$,
\begin{align}
	\E_{(x,p,y) \sim \cD(f_{\cA})} c(x, p) (p - y) &= \sum_{i=1}^n \E_{(x,p_i, y) \sim \cD(f_i)} [c(x, p_i) (p_i - y)  \mid f_{\cA} = f_i] \cdot  \Pr[f_{\cA} = f_i] \notag\\ 
	&= \frac{1}{n}\sum_{i=1}^n \E_{\substack{x \sim \cD_x, \\ y \sim  \cD_y(x,f_i(x))}} [c(x, f_i(x)) (f_i(x) - y)  \mid f_{\cA} = f_i] \label{eq:uniform_exp},
\end{align}
where $f_i$ is again the predictor chosen by the online learning algorithm at round $i$ (which is measurable with respect to $\pi_{i-1}= \{(x_j,y_j,p_j,f_j)\}_{j=1}^{i-1}$) and $p_i=f_i(x)$.
Now, consider the following stochastic process $(V_i)_{i=0}^n$, 
\begin{align*}
	V_{i} = V_{i-1} +  \E_{(x,p_i, y) \sim \cD(f_i)} [c(x, p) (p - y)  \mid f_{\cA} = f_{i}, \pi_{i-1}] - c(x_{i}, p_{i}) (p_{i} - y_{i}).
\end{align*}
This is a martingale since $V_i$ is a function of $\pi_i$ and 
\begin{align*}
\E[c(x_{i}, p_{i}) (p_{i} - y_{i}) \mid \pi_{i-1}] = \E_{(x,p_i, y) \sim \cD(f_i)} [c(x, p) (p - y)  \mid f_{\cA} = f_{i}, \pi_{i-1}].
\end{align*}
By assumption, $|c(x, p) (p - y)| \leq 1$ for any $(x,y,p)$. Hence, the increments lie in [-2,2]. 
Using the Azuma-Hoeffding inequality, with probability $1-\delta$:
\begin{align*}
|V_n| = \big| \sum_{i=1}^n \E_{(x,y) \sim \cD(f_i), p \sim f_i(x)} [c(x, p) (p - y)  \mid f_{\cA} = f_{i}, \pi_{i-1}] - c(x_{i}, p_{i}) (p_{i} - y_{i}) \big| \leq \sqrt{8 n \log(2 /\delta)}.
\end{align*}
The above holds for a specific $c$ and we can have it hold for all $c\in \cC$ via a union bound. Now using the triangle inequality and our identity from \cref{eq:uniform_exp} we get that:
\begin{align*}
	\big| \E_{(x,y) \sim \cD(f_{\cA}), p \sim f_\cA(x)} c(x, p) (p - y) \big| \leq \big| \frac{1}{n} \sum_{i=1}^n c(x_i, p_i) (p_i - y_i) \big| + 4\sqrt{\frac{\log(|\cC|) + \log(1/\delta)}{n}}
\end{align*}
for all $c\in \cC$. The result follows by upper bounding the first term on the right hand side by the regret bound on the online algorithm.
\end{proof}

\subsection{Proof of \Cref{corr:online_examples}}

\noindent \emph{Proof of a)} Consider the kernel,
\begin{align*}
	k((x,p), (x',p')) = \sum_{c \in \cC} c(x,p)c(x',p').
\end{align*}
For any $(x,p)$, $k((x,p),(x,p)) \leq |\cC|$ since $c(x,p)\in [-1,1]$. Furthermore, the kernel is continuous in $p$ since all the functions $c$ are assumed to be continuous in $p$. By the Moore–Aronszajn theorem, $k$ has an RKHS $\cF$ such that $c \in \cF$ for all $c\in \cC$. Furthermore, $\|c\|\leq 1$. 
Therefore, by \Cref{prop:k29}, the K29 algorithm guarantees online multicalibration with respect to all $c\in \cC$  with regret bounded by $\sqrt{T\cdot |\cC|}$. Note that the kernel can be evaluated in time $\cO(|\cC|)$.\\

\noindent \emph{Proof of b)} It is a well-known fact that the linear kernel $k((x,p),(x',p')) = \langle x, x'\rangle + pp'$ has an reproducing kernel Hilbert space $\cF_k \subseteq \{ \cX \times [0,1] \rightarrow \R \}$ containing all linear functions $c(x,p) = \langle x, \theta\rangle + b \cdot p$. Moreover, the squared norm of these functions in the RKHS is equal to $\|\theta\|_2^2 + p$. 

Therefore, \Cref{prop:k29} guarantees that the K29 algorithm instantiated with this linear kernel is online multicalibrated with respect to $\cC \subseteq \cF_k$ at rate $\sqrt{2T}$. Since the kernel takes $\cO(d)$ time to evaluate, the per-round runtime of the algorithm is at most $\cO(t \cdot d)$. \\

\noindent \emph{Proof of c)} The regret bound follows from the analysis in Corollary 3.3 from \cite{infinity} which provides an explicit choice of kernel such that the AnyKernel or K29 algorithms are guarantee online outcome indistinguishable with respect to all degree $s$ polynomial functions $\cC$ at rate bounded by $10\sqrt{d^s \cdot T}$. The (ANOVA) kernel in this construction can be computed in time at most $\cO(d\cdot s)$ \citep{kerneltextbook}, hence the bound on the runtime. 

\subsection{Proof of \Cref{lemma:relating_calibration}}

\begin{proof}
Expanding out the first term on the left-hand side,
\begin{align*}
			\E_{(x,p,y) \sim \cD(f)} \ell(p, y) = \E_{x \sim \cD_x, p \sim f(x), y \sim \cD_y(x,p)} (\ell(p,1) - \ell(p,0)) y + \ell(p,0).
\end{align*}
	Furthermore, 
\begin{align*}
	\E_{\substack{x \sim \cD_x, p \sim f(x) \\ y \sim \Ber(p)}} \ell(p, y)  &=  \E_{x \sim \cD_x, p \sim f(x)}  [(\ell(p, 1) - \ell(p,0))\E_{y \sim \Ber(p)}[y \mid p] + \ell(p,0) ] \\
	& = \E_{x \sim \cD_x, p \sim f(x)}  [(\ell(p, 1) - \ell(p,0))p + \ell(p,0)].
\end{align*}
Combining these two equations with the observation that for squared loss, $\ell(p,1) - \ell(p,0)=1/2-p$, we get that:
\begin{align*}
	\E_{x \sim \cD_x, p \sim f(x), y \sim \cD_y(x,p)} \ell(p, y)  - \E_{\substack{x \sim \cD(f), p \sim f(x) \\ y \sim \Ber(p)}} \ell(p, y) &= \E_{x \sim \cD_x, p \sim f(x), y \sim \cD_y(x,p)}[(\ell(p,1) - \ell(p,0)) (y - p)] \\
	&= \E_{x \sim \cD_x, p \sim f(x), y \sim \cD_y(x,p)}[(-p + 1/2)(y - p)].
\end{align*}
An identical argument shows that, given any $h: \cX \rightarrow [0,1]$,
\begin{align*}
	\E_{\substack{x \sim \cD_x, p \sim f(x) \\ y \sim \Ber(p)}} \ell(h(x), y) - \E_{(x,y) \sim \cD(f)} \ell(h(x),y)  = \E_{x \sim \cD_x, p \sim f(x), y \sim \cD_y(x,p)}[(-h(x) + 1/2)(y - p)].
\end{align*}
Taking absolute values, we see that these conditions are exactly equal to the requirement that $f$ is performatively multi-calibrated with respect to the functions $c(x,p) = p - 1/2$ and $c_h(x,p)= h(x) - 1/2$.
\end{proof}

\end{document}